\setlist{nosep}
\newacronym{mcsat}{MCSat}{Model Constructing Satisfiability}
\newacronym{dpll}{DPLL}{Davis–Putnam–Logemann–Loveland}
\newacronym{cdcl}{CDCL}{Conflict-Driven Clause Learning}
\newacronym{cnf}{CNF}{conjunctive normal form}
\newacronym{dnf}{DNF}{disjunctive normal form}
\newacronym{sat}{SAT}{boolean satisfiability}
\newacronym{smt}{SMT}{satisfiability modulo theory}
\newacronym{dpll-t}{DPLL(T)}{DPLL with theory support}
\newacronym{cad}{CAD}{cylindrical algebraic decomposition}
\newacronym{srs}{SRS}{subresultant regular subchain}
\newacronym{gcd}{gcd}{greatest common divisor}
\newcommand{\sfUndef}{\ensuremath{\mathsf{undef}}}
\newcommand{\sfTrue}{\ensuremath{\mathsf{true}}}
\newcommand{\sfFalse}{\ensuremath{\mathsf{false}}}
\newcommand{\sfValue}[2]{\ensuremath{\mathsf{val}(#1,#2)}}
\newcommand{\sfConstraints}[1]{\ensuremath{\mathsf{constr}(#1)}}
\newcommand{\sfFeasible}[1]{\ensuremath{\mathsf{fsbl}(#1)}}
\newcommand{\sfLevel}[1]{\ensuremath{\mathsf{level}(#1)}}
\newcommand{\sfCompatible}[2]{\ensuremath{\mathsf{comp}(#1,#2)}}
\newcommand{\sfExplain}[2]{\ensuremath{\mathsf{exp}(#1,#2)}}
\newcommand{\sfResolve}[3]{\ensuremath{\mathsf{resolve}(#1,#2,#3)}}
\newcommand{\sfSat}{\ensuremath{\mathsf{sat}}}
\newcommand{\sfUnsat}{\ensuremath{\mathsf{unsat}}}
\newcommand{\justifiy}{\ensuremath{\!\!\rightarrow\!\!}}
\newcommand{\true}{\sfTrue}
\newcommand{\false}{\sfFalse}
\newcommand{\Fp}{\ensuremath{\mathbb{F}_q}}
\newcommand{\FpX}{\ensuremath{\Fp[X]}}
\newcommand{\FpXk}{\ensuremath{\Fp[X_k]}}
\newcommand{\FpXkm}{\ensuremath{\Fp[X_{k-1}]}}
\newcommand{\zero}{\ensuremath{\mathsf{zero}}}
\newcommand{\zeroq}{\ensuremath{\mathsf{zero}_{q}}}
\newcommand{\proj}{\ensuremath{\mathsf{proj}_k\xspace}}
\newcommand{\FpO}{\ensuremath{\overline{\set F}_q}}
\newcommand{\lv}[1]{\ensuremath{\mathsf{lv}(#1)}}
\newcommand{\lc}[2]{\ensuremath{\mathsf{lc}(#1,#2)}}
\newcommand{\ldeg}[2]{\ensuremath{\mathsf{deg}(#1,#2)}}
\newcommand{\red}{\ensuremath{\mathsf{red}}}
\newcommand{\var}[1]{\ensuremath{\mathsf{vars}(#1)}}
\newcommand{\level}[1]{\ensuremath{\mathsf{level}(#1)}}
\newcommand{\poly}[1]{\ensuremath{\mathsf{poly}(#1)}}
\newcommand{\cls}{\ensuremath{\mathsf{cls}}}
\newcommand{\prem}{\ensuremath{\mathsf{prem}}}
\newcommand{\pquo}{\ensuremath{\mathsf{pquo}}}
\newcommand{\srs}{\ensuremath{\mathsf{srs}}}
\newcommand{\alphaB}{\ensuremath{\boldsymbol{\alpha}}}
\newcommand{\xiB}{\ensuremath{\boldsymbol{\xi}}}
\newcommand{\alphaM}{\ensuremath{\alphaB_M}}
\newcommand{\C}{\ensuremath{\mathcal{C}}}
\let\set\mathbb
\newtheorem{definition}{Definition}
\newtheorem{example}{Example}
\newtheorem{theorem}{Theorem}
\newcommand\rulename[1]{\textsc{#1}}
\newcommand{\SelectClauseRule}{\rulename{Sel-Clause}\xspace}
\newcommand{\ConflictRule}{\rulename{Conflict}\xspace}
\newcommand{\SatRule}{\rulename{Sat}\xspace}
\newcommand{\LiftLevelRule}{\rulename{Lift-Level}\xspace}
\newcommand{\BPropRule}{\rulename{B-Prop}\xspace}
\newcommand{\DecideLitRule}{\rulename{Decide-Lit}\xspace}
\newcommand{\TPropRule}{\rulename{T-Prop}\xspace}
\newcommand{\ResolvePropRule}{\rulename{Resolve-Prop}\xspace}
\newcommand{\ResolveDecRule}{\rulename{Resolve-Dec}\xspace}
\newcommand{\ConsumePropRule}{\rulename{Consume-Prop}\xspace}
\newcommand{\ConsumeDecRule}{\rulename{Consume-Dec}\xspace}
\newcommand{\DropLevelOneRule}{\rulename{Drop-Level-1}\xspace}
\newcommand{\DropLevelTwoRule}{\rulename{Drop-Level-2}\xspace}
\newcommand{\UnsatRule}{\rulename{Unsat}\xspace}
\begin{document}
\title{SMT Solving over Finite Field Arithmetic}
\titlerunning{SMT Solving over Finite Field Arithmetic}
\author{Thomas Hader \and
Daniela Kaufmann \and
Laura Kov\'acs}
\authorrunning{T. Hader et al.}
\institute{TU Wien, Vienna, Austria \\
\email{\{thomas.hader,daniela.kaufmann,laura.kovacs\}@tuwien.ac.at}}
\maketitle
\begin{abstract}

Non-linear polynomial systems over finite fields are used to model 
functional behavior of cryptosystems, with applications in system security, 
computer cryptography, and post-quantum cryptography. 
Solving polynomial systems is also one of the most difficult problems in mathematics.
In this paper, we propose an automated reasoning procedure for deciding the satisfiability of a system of non-linear equations 
over finite fields. We introduce zero
decomposition techniques to prove
that polynomial constraints over finite fields yield finite basis
explanation functions. We use these explanation functions  in
model constructing satisfiability solving, allowing us to equip a CDCL-style search
procedure with tailored theory reasoning in SMT solving over finite fields. 
We implemented our approach and provide a novel and effective reasoning prototype for non-linear arithmetic over finite fields. 

\end{abstract}

\section{Introduction}\label{sec:intro}

Solving a system of polynomial equations is one of the hardest problems in mathematics, with emerging applications
in cryptography, software security, code optimizations, control
theory, and many other areas of computer science.
Computing  solutions to polynomial
equations is known to be decidable and algorithmically
solvable over algebraically closed fields thanks to the fundamental theory of algebra and Buchberger's algorithms for Gr\"obner basis
computation~\cite{SturmfeldsSolvingSystemofPolyEqs,Buchberger}. Yet,
when restricting the algorithmic study of solving polynomial equations
over integers, the problem becomes
undecidable~\cite{Matiyasevich93}.  

Until recently, the algorithmic study of solving polynomial
constraints, and hence automated reasoning  in polynomial arithmetic, was the
sole domain of computer algebra systems~\cite{Reduce,Mathematica,maple,sage}.
These systems  are very powerful in computing the set of all solutions of
polynomial constraints, but generally suffer from high computational
overhead, such as doubly exponential computation complexities in terms of
number of variables~\cite{DBLP:journals/jsc/DavenportH88}.

With the purpose of scaling non-linear
reasoning, especially for solving satisfiability instances of
polynomial arithmetic,
exciting new developments in \gls{sat}/\gls{smt} reasoning arose by combining a \gls{cdcl}-style search for a feasible assignment, called \gls{mcsat}, with
algebraic decompositions and projections over the solution space of
polynomial inequalities~\cite{nlsat,mcsat}. 
Unlike the classic CDCL(T) approach of \gls{smt}-solvers,
\gls{mcsat}~\cite{mcsat,mcsat_inear_int,nlsat} combines the
capabilities of a SAT solver and a theory solver into a single
procedure while keeping the search principles theory independent. 
To the best of our knowledge, SMT solving over finite fields lacks a dedicated approach for reasoning over finite fields.
Encoding the problem in existing theories (e.g.\ NIA) are inefficient~\cite{ozdemir2023satisfiability}.

\emph{In this paper we address this challenge and introduce a 
CDCL-style search procedure extended with zero decomposition
techniques for explaining and resolving (variable) conflicts while solving polynomial
constraints over finited fields.}

\paragraph{Need for Finite Fields.}
Finite fields provide natural
ground to model bounded machine arithmetic, for example when considering 
modern cryptosystems with applications in system security and
post-quantum cryptography. Existing approches build for example private and secure systems from Zero-Knowledge
Proofs~\cite{GoldwaserMicaliRackoff-SIAM89} or verify blockchain
technologies, 
such as smart contracts~\cite{Szabo2018SmartC}, with all these efforts
implementing finite field arithmetic. 
Elliptic curve cryptography~\cite{ecc} exploits polynomials over
finite fields, with further use in TLS encryption~\cite{tls},
SSH~\cite{ssh} and digital signatures~\cite{ecsda}.
Polynomial equations over finite fields are also used in coding
theory~\cite{book:MW,mou-phdthesis},  decoding error-correcting codes
of large error rates. 
In addition, solving
polynomials over finite fields has applications in finite biological models, such as modeling cycles of biological networks as continuous dynamical systems~\cite{NiuWang-MCS,mou-phdthesis}.

\paragraph{SMT Solving over Finite Fields.}
In this paper we introduce 
 an \gls{mcsat}-based decision procedure for solving
polynomial constraints over finite fields, extending thus the landscape
of \gls{smt} solving with finite field arithmetic.
We formalize \emph{SMT solving over finite fields}  as
follows (see Section~\ref{sec:prelim} for relevant 
notation).

\begin{mdframed}
Given a finite field $\mathbb{F}_q$ with order $ q = p^k $, where $ p
$ is a prime number and $ k\geq 1 $, let $F$ be a set of polynomial
constraints in $\FpX$ and $\mathcal{F}$ a formula following the logical structure:
\[\mathcal{F} \quad= \quad \bigwedge_{C\subseteq F} \bigvee_{f\in C} f  \quad=  \quad \bigwedge_{C\subseteq F} \bigvee_{f\in C} \poly{f} \vartriangleright 0 \quad \text{with}   \vartriangleright \in \{=, \neq\}. \]
\emph{SMT solving over finite fields:}  Does an assignment $ \nu : \{x_1,\dots,x_n\} \rightarrow \Fp $ exists that satisfies $\mathcal{F}$?
\end{mdframed}

\begin{example}\label{ex:motivating}
  We show an instance of the SMT solving problem over finite fields,
  by considering  the finite field  $\mathbb{F}_5$ whose elements are
  $\{0,1,2,3,4\}$.
  Note that $-1$ is $4$ in   $\mathbb{F}_5$.
  Let $\mathcal{F}$ be the formula representing the conjunction of the
  polynomial constraints $\{x_1^2-1=0, x_1x_2-x_2-1=0\}$ over  $\mathbb{F}_5[x_1,x_2]$.
  In our work we address SMT solving of $\mathcal{F}$ over  $\mathbb{F}_5[x_1,x_2]$, deriving that $\mathcal{F}$ is satisfiable using the variable
  assignment $\{x_1\mapsto 4, x_2\mapsto 2\}$.
  \end{example}

To the best of our knowledge, existing \gls{smt}-based approaches lack the necessary theory for reasoning over 
finite fields, and therefore assertions that model the behavior 
of finite fields must be included in the input problem formalization
(i.e.\ $F$).
As a workaround, one may use so-called \emph{field polynomials} ($\{x_k^q-x_k\mid 1\leq k\leq n\} $ for a ring $\FpX$) to
characterize finite fields and thus restrict the solution space of $\FpX$  to the  
 finite domain of the field  $\mathbb{F}_q$. Unfortunately, using field polynomials is
 practically inefficient, as already witnessed in our initial attempts
 from~\cite{ma, hader-smt22}: when used during variable elimination, field
 polynomials yield new polynomials as logical consequences of the
initial set $F$ of polynomials and at the same time hugely increase the
degree and size of the newly derived polynomials in the search space.

\paragraph{Our contributions.}
In this paper we do not rely on field 
polynomials but 
extend the theory-dependent rules of \gls{mcsat} to natively support finite 
fields arithmetic.
The main difficulty in \gls{mcsat}-based reasoning comes with
generating so-called \emph{explanation clauses} for resolving
conflicting variable assignments during SMT solving. 
We therefore develop a novel \emph{theory propagation} rule for
finite fields that admits propagation of theory literals
(Section~\ref{sec:explain}). 
Our method exploits zero decomposition
techniques~\cite{wang1993elimination} to prove that
polynomial constrains over finite fields yield finite basis explanation
functions (Theorem~\ref{thm:exp:finite}), implying 
computabilty of such functions. 
We use single polynomial projections and adjust  subresultant regular subchains~\cite{wang2001elimination} to
calculate greatest common divisors with regard to partial
  variable assignments (Section~\ref{sec:mcSat:exp}), allowing
us to avoid the use of field polynomials when deriving explanation clauses during solving polynomial
constraints (Theorems~\ref{thm:projCoeff}--\ref{thm:srs}).
Our explanation clauses are integrated within \gls{mcsat},
restricting the search space of SMT solving over $\FpX$.  
We implement our approach in a new prototype for SMT solving over
finite fields (Section~\ref{sec:impl})  and experimentally demonstrate the applicability of  SMT solving over finite fields (Section~\ref{sec:experiments}).

\section{Preliminaries}
\label{sec:prelim}
We provide a brief summary of the relevant algebraic concepts of finite fields~\cite{contemporaryAbstractAlgebra}. %

\paragraph{Fields and Polynomials.}  
 A \emph{field} $\set F$ consists of a set $S$ on which two binary operators addition~``$+$'' and multiplication ``$\cdot$'' are defined. 
 Both operators are commutative, associative, have a neutral element in $S$ (denoted as \emph{zero} $ (0) $ and \emph{one} $ (1) $, respectively), and each element in $S$ has additive and multiplicative inverses.
Furthermore, distributivity holds. 
Informally speaking, a \emph{field} is a set $S$ with well-defined operations addition, subtraction, multiplication, and division (with the exception of division by zero).
Field examples include $\set Q$ and $\set R$. 

Let $X$ be the set of variables $\{x_1,\dots,x_n\}$.
We sort the variables in $X$ according to their index $ x_1 < x_2 < \cdots < x_n $. 
Since $x_i$ is the i-th variable in the order, we say it is of of \emph{class} $ i $, denoted by $ \cls(x_i)  = i $.
We have $X_k = \{x_i \in X \mid i \leq k\}$. 

By $\set F[X]$ we denote the ring of polynomials in variables $X$ with coefficients in $\set F$. 
A \emph{term} $\tau=x_1^{d_1}\cdots x_n^{d_n}$  is a product of powers of
variables for $d_i\in\set N$. If all $d_i = 0$, we have $\tau = 1$.
A multiple of a term $c\tau$ with $c \in \set F\setminus\{0\}$ is a \emph{monomial}. A \emph{polynomial} is a finite sum of monomials with pairwise distinct terms.

The \emph{degree of a term} $\tau$ is the sum of its exponents $\sum_{i=0}^n d_n$. The \emph{degree of a polynomial}~$p$ is the highest degree of its terms.
 We write $\ldeg{p}{x_i}$ to denote the highest \emph{degree of $x_i$} in $p$.

 For a polynomial $ p $, the set of variables of $p$ is denoted by $\var{p}$. If $\var{p} = \emptyset$, then~$p$ is \emph{constant}. If $|\var{p}| = 1$, $p$ is \emph{univariate}, and otherwise it is \emph{multivariate}.
For a set of polynomials $P$, we define $ \var{P} = \bigcup_{p\in P} \var{p} $.

 An order $\leq$ is fixed on the set of terms such that for all
 terms $\tau,\sigma_1,\sigma_2$ it holds that $1\leq\tau$ and further $\sigma_1\leq\sigma_2\Rightarrow\tau\sigma_1\leq\tau\sigma_2$.
  One such order is the \textit{lexicographic term order}:   
  If $x_1 < x_2 < \dots x_n$, then for two terms
  $\sigma_1= x_1^{d_1}\cdots x_n^{d_n}$, $\sigma_2= x_1^{e_1}\cdots x_n^{e_n}$
  it holds $\sigma_1 < \sigma_2$ iff there exists an index $i$ with $d_j = e_j$
  for all $j>i$, and $d_i<e_i$.

For a polynomial $p$, the \emph{leading variable} $ \lv{p} $ is the variable $x_i$ of $ \var{p} $ with the highest class. Let $\cls(p) = \cls(\lv{p})$.
We define the coefficient of $x_i^{\ldeg{p}{x_i}}$ as the \emph{leading coefficient} of $p$ with respect to $x_i$ and write it as $\lc{p}{x_i}$.
We denote $\red(p,x_i) = p - \lc{p}{x_i}x_i^{\ldeg{p}{x_i}}$ as the \emph{reductum} of $p$ with respect to $x_i$.

\begin{example}
  Given the polynomial $p = 2x_3^2x_1 + 4x_3x_2^4 + x_3x_2 + 7x_1 \in \set Q[x_1, x_2, x_3]$,
  we have $\var{p} = \{x_1, x_2, x_3\}$, $\lv{p} = x_3$ and $\red(p,x_3) = 4x_3x_2^2 + x_3x_2 + 7x_1$. 
  Furthermore, $\lc{p}{x_1} = 2x_3^2 + 7$, $\lc{p}{x_2} = 4x_3$, and $\ldeg{p}{x_3} = 2$, $\ldeg{p}{x_2} = 4$.
  \end{example}

  A polynomial $p\in\set F[X]$ is \emph{irreducible} if it cannot be represented as the product of two non-constant polynomials, i.e.\ there exist no $q,r \in\set F[X]$ such that $p = q\cdot r$. 
  A polynomial $g \in \set F[X]$ is called a \emph{\gls{gcd}} of polynomials $p_1,\ldots, p_s$ if $g$ divides $p_1,\ldots, p_s$
  and every common divisor of $p_1,\ldots, p_s$ divides $g$.
  
A tuple of values $ \alpha \in \set F^n $ is a \emph{root} or \emph{zero} of a polynomial $ p \in \set F[X] $ if $ p(\alpha) = 0 $.
A field $\set F$ is \emph{algebraically closed} if every non-constant univariate polynomial in $\set F[x]$ has a root in $\set F$.

Let $\set K \subseteq \set F$  be a field with respect to the field operations inherited from $\set F$. We call $\set F$ a \emph{field extension} of $\set K$ and write $\set F/\set K$.
An \emph{algebraic extension} of $\set F$ is a field extension $\set G /\set F$ such that every element of $\set G$ is a root of a non-zero polynomial with coefficients in $\set F$. 
An \emph{algebraic closure} of a field $\set F$ is an algebraic extension $\set G$ that is algebraically closed; we call $\set F$ the \emph{base field}.

\paragraph{Finite Fields.} 
In a \emph{finite field} $\Fp$ the set $S$ has only finitely many elements. The number of elements is denoted by $q$ and is called the \emph{order} of the finite field. We denote the algebraic closure of $\Fp$ as $\FpO$.
A finite field $\Fp$ exists iff $q$ is the $k$-th power of a prime $p$, i.e.\ $q = p^k$.
All finite fields with the same order are isomorphic, i.e.\ there exists a structure-preserving mapping between them. %
In case $k = 1$, $\Fp$ can be represented by the integers modulo $p$ and we have $S =\{0,1,\ldots, p-1\}$ with the standard integer addition and multiplication operation performed modulo $p$.
For example for $\set F_5$ we have $S = \{0,1,2,3,4\}$ and $2+3 = 0$ and $3\cdot4 = 2$.

The elements of $\Fp = \set F_{p^k}$ with $k>1$ are polynomials with degree $k-1$ and coefficients in~$\set F_p$.
Addition and multiplication of the polynomials is performed modulo a univariate irreducible polynomial $g \in \Fp[a]$ with degree $k$. 
For example  $\mathbb{F}_{4} =  \mathbb{F}_{2^2}$ is generated using the %
irreducible polynomial $g$: $ a^2 + a + 1 $. The elements are $\{0, 1, a, 1+a\} $ and $ ((a) + (1)) \cdot (a+1) $ evaluates to $ a $.

\paragraph{Polynomial Constraints and Formulas.}
A \emph{polynomial constraint $ f $ over $p$} in the ring $\FpX$ is of the form $ p \vartriangleright 0 $ where $ p \in \FpX$ and $ \vartriangleright\, \in \{=, \neq\} $. Since a total ordering with respect to the field operations on elements of a finite field $\Fp$ does not exist, we only consider inequality constraints of the form $ p \neq 0 $ and do not consider $<$ and $>$.
We define $ \poly{f} = p $, and extend  $ \var{f} = \var{\poly{f}}$ and $\cls(f) = \cls(\poly{f})$.
For a set of constraints $F$ we define $ \var{F} = \bigcup_{f\in F} \var{f} $. 
A polynomial constraint $ f $ is negated by substituting $ \vartriangleright $ in $f$ with the other element, i.e.\ $ \lnot (\poly{f} = 0) $ is equivalent to $ \poly{f} \neq 0 $. 
 
Let $ \nu : X \rightarrow \Fp $ denote an (partial) \emph{assignment} of variables $X$. We extend $\nu$ to an evaluation of polynomials in the natural way, i.e. $\nu(p): \FpX \rightarrow \Fp$.
Given an assignment $ \nu $ and a polynomial constraint $ f = p \vartriangleright 0 $, we say $\nu$ \emph{satisfies} $ f $ iff $ \nu(p) \vartriangleright 0 $ holds.
The function $\nu$ is also used to evaluate a constraint~$ f $.  If $ \nu $ does not assign all variables of $ \poly{f} $, we define $ \nu(f) = \sfUndef $.
If $\nu$ assigns all variables of $ \poly{f} $, then $ \nu(f) = \true $ if $ \nu $ satisfies $ f $, and $ \nu(f) = \false $ otherwise. 
Given a set of polynomial constraints $ F $, we have $\nu(F)=\sfTrue$  iff $ \nu $ satisfies all elements in $ F $. 
If such an $ \nu $ exists, we say that $ F$ is \emph{satisfiable} and $ \nu $ \emph{satisfies} $F$.

We refer to a single constraint as an \emph{atom}. A \emph{literal} is an atom or its negated form.
A~\emph{clause}~$C$ is a disjunction of literals. If $C$ contains only one literal it is a \emph{unit clause}.
A \emph{formula}~$\mathcal{F}$ is a set of clauses $\C$. Logically a formula represents a conjunction of disjunctions of literals.
An \emph{assignment} $ \nu $ satisfies a clause $C$ if at least one literal in $C$ is satisfied by $\nu$. 
Finally,  $\nu$ satisfies a set of clauses if every clause is satisfied by $\nu$.

\section{Model Constructing Satisfiability (\glsentryname{mcsat})}
\label{sec:mcsat}
In this section, we summarize the \gls{mcsat}
approach~\cite{mcsat,mcsat_inear_int,nlsat} as presented
in~\cite{nlsat}. %
Our  \gls{mcsat} adjustments for finite fields are given in Sections~\ref{sec:explain} and~\ref{sec:mcSat:exp}.

\paragraph{\gls{mcsat} Terminology.} The \gls{mcsat} procedure is a
transition system with each state denoted by an indexed pair $ \langle
M, \C \rangle_k $ of a \emph{trail} $M$ and a set of clauses $\C$. The
index $k$ specifies the \emph{level} of the state. In our case, a
clause $C\in\C$ is a set of polynomial constraints over $\set
F_q[X]$. %
We require the following terminology: %

\renewcommand\labelitemi{--}
\begin{itemize}[wide=0em,leftmargin=1em]
	\item Each \emph{trail element} of $M$ is either a \emph{decided} or \emph{propagated literal}, or a \emph{variable assignment}. 
	\item 
	A decided literal $f$ is considered to be true. 
	A propagated literal, indicated $C \rightarrow f$, denotes that the status of clause $C$ implies that $f$ is true. 
	A variable assignment $ x_i \mapsto \alpha $ maps a theory variable $x_i \in X$ to some $ \alpha \in \Fp $.
\item We say $f \in M $, if a constraint $f$ is a trail element.
Let $\sfConstraints{M} = \{f \in M\}$.  
\item A trail is \emph{non-redundant} if it contains each constraint at most once.
\item For a constraint $f$ let $\level{f} = i \Leftrightarrow x_i \in \var{f} \land \forall j>i:  x_j \notin \var{f}$ and define the level of a clause $\level{C} = \max_{f \in C}\level{f}$. 
Let $\C_i = \{C \in \C \mid \level{C} \leq i\}$.
\item We have $\level{M} = k$, if $x_{k-1} \mapsto \alpha$ is the highest variable assignment in $M$, i.e.\ no variable assignment for $x_k, \dots, x_n$ exists.
\item A trail is \emph{increasing in level}, if all variables but the highest level variable of a constraint $f$ are assigned before $f$ appears on the trail.
\item Taking all theory variable assignments $ x_1\mapsto\alpha_1,\dots,x_k\mapsto \alpha_k $ of a trail $ M $ with $\level{M} = k+1$, we define $\alphaM = \alpha_1,\dots,\alpha_k$ and generate a (partial) assignment function $\nu_M : X \mapsto \set F_q$. We overload $\nu_M$ to evaluate constraints and sets of constraints as discussed in Section~\ref{sec:prelim}. 
\item We further say that $ M $ is \emph{feasible} if $ \nu_M(\sfConstraints{M}) $ has a solution for $ x_k $.
The set of possible values for $ x_k $ is denoted by $ \sfFeasible{M} $.
\item Given an additional constraint $f$, with $ \poly{f} \in \FpX $, we extend $ \sfFeasible{f, M} = \sfFeasible{\llbracket M, f \rrbracket}$.
If $ \sfFeasible{f, M} \neq \emptyset $ we say that $f$ is \emph{compatible} with $ M $, denoted by $ \sfCompatible{f}{M} $.
\item A state $ \langle M, \C \rangle_k $ is \emph{well-formed} when $ M $ is non-redundant, increasing in level, 
$\sfLevel{M} = k $, 
$\sfFeasible{M} \neq \emptyset$, 
$\nu_M$ satisfies $\mathcal{C}_{k-1}$, 
$\forall f \in \sfConstraints{M}: \nu_M(f) = \true $, and all propagated literals $ E\justifiy f$ are implied, i.e.\ $ f \in E $ and for all literals $ f' \neq f $ in $ E $, $ \nu_M(f') = \false $ or $ \lnot f'\in \sfConstraints{M}$.
\item
Given a well-formed state with trail $ M $, assume constraint $f$ with $ \poly{f} \in \FpXk $. Let:
\[
\sfValue{f}{M} =
\begin{cases}
\nu_{M}(f) & x_k \notin \var{f}, \level{M} = k \\
\true & f \in \sfConstraints{M} \\
\false & \lnot f \in \sfConstraints{M}\\ 
\sfUndef & \text{otherwise}
\end{cases}
\]

We overload this function to handle clauses. As such, we define
$\sfValue{C}{M} = \true$ if there exists $f \in C$ such that
$\sfValue{f}{M} = \true$;
$\sfValue{C}{M} = \false$ if 
$\sfValue{f}{M} = \false$ for al $f \in C$;
and $ \sfValue{C}{M} = \sfUndef $ in all other cases.
\end{itemize}

\begin{figure}[tb]
  \textbf{Search Rules}\vspace{-1.5em}
  \begin{center}
   \resizebox{\textwidth}{!}{\fbox{\begin{tabular}{l@{\quad}l@{$\;\rightarrow\;$}l@{\quad}l}
      [\SelectClauseRule] & $\langle M, \mathcal{C}\rangle_k$ & $\langle M, \mathcal{C}\rangle_k \vDash C $ &
        \hbox{\strut $ C \in \mathcal{C}_k \wedge \sfValue{C}{M} = \sfUndef $}\\[0.5em]

      [\ConflictRule]  & $\langle M, \mathcal{C}\rangle_k$ & $\langle M, \mathcal{C}\rangle_k \vdash C $ &
	\hbox{\strut $ C \in \mathcal{C}_k \land \sfValue{C}{M} = \sfFalse $}\\[0.5em]

      [\SatRule] & $\langle M, \mathcal{C}\rangle_k$ & $\langle\nu_M,\sfSat\rangle $ &
        \hbox{\strut $ x_k \notin \var{\mathcal{C}} $}\\[0.5em]

      [\LiftLevelRule]    & $\langle M, \mathcal{C}\rangle_k$ & $\langle\llbracket M, x_k \mapsto \alpha\rrbracket,\mathcal{C}\rangle_{k+1} $ & 
				\hbox{\strut $ x_k \in \var{\mathcal{C}} \land \alpha \in \sfFeasible{M} \land \sfValue{\mathcal{C}_k}{M} = \true $ }
	
   \end{tabular}}}
  \end{center}

  \textbf{Clause Satisfaction Rules}\vspace{-0.5em}
	\begin{center}
		\resizebox{\textwidth}{!}{\fbox{
		\begin{tabular}{l@{\quad}l@{$\;\rightarrow\;$}l@{\quad}l}
		[\BPropRule] & $\langle M, \mathcal{C}\rangle_k \vDash C$ &
	$\langle\llbracket M, C\justifiy f\rrbracket,\mathcal{C}\rangle_{k}$ &
	\lower6pt\vbox{
		\hbox{\strut $ C = \{f_1, \dots, f_m, f \} \land \sfValue{f}{M} = \sfUndef \land$ }
		\hbox{\strut $ \sfCompatible{f}{M} \land \forall i: \sfValue{f_i}{M} = \false $ }}\\[1em]
	
	[\DecideLitRule] & $\langle M, \mathcal{C}\rangle_k \vDash C$ &
	$\langle\llbracket M, f_1\rrbracket,\mathcal{C}\rangle_{k} $ &
	\lower6pt\vbox{
		\hbox{\strut $\{f_1, f_2, \dots\} \subseteq C \land \sfCompatible{f_1}{M} \land$ }
		\hbox{\strut $\forall f_i: \sfValue{f_i}{M} = \sfUndef $ }}\\[1em]
	
	[\TPropRule] & $\langle M, \mathcal{C}\rangle_k \vDash C $&
	$\langle\llbracket M, E\justifiy f\rrbracket,\mathcal{C}\rangle_{k} $ &
			\lower6pt\vbox{
				\hbox{\strut $ f \in \{L, \lnot L \mid L \in C\} \land \lnot\sfCompatible{\lnot f}{M} \land$ }
				\hbox{\strut $ \sfValue{f}{M} = \sfUndef  \land E = \sfExplain{f}{M} $ }}
			\end{tabular}}}
		\end{center}
		
  \textbf{Conflict Resolution Rules}\vspace{-1.5em}
	\begin{center}
		\resizebox{\textwidth}{!}{\fbox{
		\begin{tabular}{l@{\quad}l@{$\;\rightarrow\;$}l@{\quad}l}
	[\ResolvePropRule] & $\langle\llbracket M, E\justifiy f\rrbracket,\mathcal{C}\rangle_{k} \vdash C$ &
	$\langle M, \mathcal{C}\rangle_k \vdash R $ &
	\lower6pt\vbox{
		\hbox{\strut $\lnot f \in C \land $ }
		\hbox{\strut $R = \sfResolve{C}{E}{f} $ }}\\[0.8em]
	
		[\ResolveDecRule] & $\langle\llbracket M, f\rrbracket,\mathcal{C}\rangle_{k} \vdash C$ &
		$\langle M, \mathcal{C} \cup \{C\}\rangle_k \vDash C $ &
			\hbox{\strut $\lnot f \in C$ }\\[0.2em]
		
		[\ConsumePropRule] & $\langle\llbracket M, E\justifiy f\rrbracket,\mathcal{C}\rangle_{k} \vdash C $ &
			$\langle M, \mathcal{C}\rangle_k \vdash C$ &
				\hbox{\strut $\lnot f \in C$ }\\[0.2em]
	
		[\ConsumeDecRule] & $\langle\llbracket M, f\rrbracket,\mathcal{C}\rangle_{k} \vdash C$ &
				$\langle M, \mathcal{C} \rangle_k \vdash C$ &
					\hbox{\strut $\lnot f \in C$ }\\[0.2em]
	
	[\DropLevelOneRule] & $\langle\llbracket M, x_{k+1} \mapsto \alpha\rrbracket,\mathcal{C}\rangle_{k+1} \vdash C$ &
		$\langle M, \mathcal{C} \rangle_k \vdash C$ &
		\hbox{\strut $\sfValue{C}{M} = \false$ }\\[0.2em]

	[\DropLevelTwoRule] & $\langle\llbracket M, x_{k+1} \mapsto \alpha\rrbracket,\mathcal{C}\rangle_{k+1} \vdash C$ &
		$\langle M, \mathcal{C} \cup \{C\} \rangle_k \vDash C$ &
		\hbox{\strut $\sfValue{C}{M} = \sfUndef$ }\\[0.2em]
	
		[\UnsatRule] & $\langle\llbracket\,\rrbracket,\mathcal{C}\rangle_1 \vdash C$ &
		$\sfUnsat$ & \\
		\end{tabular}}}
	\end{center}
	\vspace{-2ex}
	\caption{Transition Rules of \gls{mcsat}}
	\label{fig:rules}
	\end{figure}

\paragraph{\gls{mcsat} Calculus.} 
The \gls{mcsat} calculus is given in Figure~\ref{fig:rules} and detailed next.
Given a set of clauses $\C$, in our case clauses of polynomial constraints over $\FpX$, the goal is to move
from an initial state of $ \langle \llbracket \, \rrbracket, \C
\rangle_1 $ to one of the two termination states, namely $
\langle\sfSat, \nu\rangle $ or $ \sfUnsat $, by continuously applying transition rules.
A termination and correctness proof that is independent of the used theory, is given in~\cite[Thm.~1]{nlsat}.

The search rules either select a clause for further processing (\SelectClauseRule), detect a conflict (\ConflictRule), detect satisfiability (\SatRule), or assign a variable while increasing the level before performing another search step (\LiftLevelRule). 
 
Clause satisfaction rules determine how a clause $ C $ is absorbed into the trail $ M $ given a state $ \langle M, \mathcal{C}\rangle_k \vDash C  $ through semantic reasoning on the theory. 
The first two rules are similar to classical DPLL-style propagation and differ in whether we meet a single compatible literal (\BPropRule) or can choose between multiple yet undetermined compatible literals (\DecideLitRule).

The \TPropRule rule is the core component of any \gls{mcsat} procedure.
It utilizes theory knowledge to propagate literals during the search.
The explanation function $\mathsf{exp}$ generates a valid lemma $E$ that justifies the propagation.
This rule was dubbed ``R-Propagation'' in~\cite{nlsat} since the
focus was merely real arithmetic. 
However, the rule itself does not rely on reals, only the explanation
function does. For our purpose, we refer to this rule as
``Theory-Propagation'', in short \TPropRule. In Section~\ref{sec:explain} we
prove that explain functions for polynomials over finite fields always
exists. Moreover, in Section~\ref{sec:mcSat:exp} we show that explanation functions are also
computable using zero decomposition procedures, avoiding the
applications of field polynomials within \gls{mcsat}. 

The conflict resolution rules of Figure~\ref{fig:rules} rely on standard boolean conflict analysis~\cite{grasp}, using the standard boolean resolution function \textsf{resolve}.
We either resolve propagation or decision steps (\ResolvePropRule, 
\ResolveDecRule) or backtrack if there is no conflicting literal in the 
trail's top literal (\ConsumePropRule, \ConsumeDecRule). 
The only theory-specific aspects of the conflict resolution are the rules \DropLevelOneRule and \DropLevelTwoRule, where we undo theory variable assignments.
We add the conflict clause $C$ to the clause set to avoid assignment repetition.

\section{Theory Propagation for Polynomials over Finite Fields}
\label{sec:explain}

The key challenge in designing an \gls{mcsat}-based decision
procedure for a particular theory is developing theory propagation
in the respective theory to be used within the \TPropRule rule of the \gls{mcsat} calculus of
Figure~\ref{fig:rules}. In this section, we introduce zero
decomposition procedures over polynomial constraints
(Section~\ref{sec:zeroFF}) in support of theory
propagation over finite fields, allowing us to prove the existence of 
explanation clauses within \gls{mcsat} over finite fields (Section~\ref{sec:explainFF}). 

Upon application of the \TPropRule rule, a literal is selected for
propagation and justified by a newly generated explanation clause
$E$. In our work we focus on propagating polynomial constraint
literals $f$ and define their respective explanations using so-called theory
lemmas. 
\begin{definition}[Polynomial Explanation]\label{def:expclause}
	Let $f$ be a constraint, $M$ a trail, and $ E $ a clause of constraints.
	$E$ is a \emph{valid (theory) lemma} if for any arbitrary assignment $\nu$, $\nu(E) \neq \sfFalse$. 
	The clause $E$ \emph{justifies $f$ in $M$} iff $f \in E$ and $\forall f' \in E: f \neq f' \Rightarrow \sfValue{f'}{M} = \sfFalse$.
	$E$ is an \emph{explanation clause} for $ f $ in $ M $ if $E$ is a valid theory lemma and justifies~$f$~in~$M$. %
\end{definition}

Note that the explanation clauses $E$ for $f$ are generated using an
explanation function \textsf{exp} during the applications of
the \TPropRule rule of Figure~\ref{fig:rules}. We define
the \textsf{exp} function as follows. 

\begin{definition}[Polynomial Explanation Function \textsf{exp}]
	A function $\mathsf{exp}:\{$constraint$\}\times\{$trail$\}\rightarrow \{$clause$\}$ is an \emph{explanation function} $\sfExplain{f}{M} = E$
	iff $f \notin M$, $ \lnot\sfCompatible{\lnot f}{M} $, and $E$ is an explanation clause for $f$ in $M$. 
\end{definition}

\begin{example}
	A most trivial explanation function propagates $ f $ by excluding the current trail $ M $ of level $ k $ via 
	$ E = \{f\} \cup \{ \lnot f' \mid f' \in M \text{ and } x_{k-1} \in \var{f'} \} \cup \{ x \neq \alpha \mid (x\mapsto\alpha) \in M \} $.
\end{example}

As any (non-trivial) explanation function may introduce new literals,
the termination of a general \gls{mcsat} procedure requires that all
newly introduced literals are taken from a finite basis. 
\begin{definition}[Finite-basis Polynomial Explanation]
	The function $ \mathsf{exp}(f,M) $ is a \emph{finite basis explanation function} if it returns an explanation clause $ E $ for $ f $ in $ M $ and all new literals in $ E $ are taken from a finite basis.
\end{definition}

In conclusion, if a  theory admits a finite basis explanation
function, then \gls{mcsat}-based reasoning in that respective theory
is terminating. Yet, providing a finite basis explanation function is not trivial. 
A key piece %
is an efficient procedure to decompose polynomial sets.
We present our tailored procedures in Section~\ref{sec:mcSat:exp}.

\renewcommand{\S}{\ensuremath{\mathcal{S}}}
\newcommand{\T}{\ensuremath{\mathcal{T}}} 
\subsection{Zeros in Polynomials over Finite Fields}\label{sec:zeroFF}

Let us arbitrarily fix  the  sets of polynomials $
P, Q \subset \FpXk $. We assume
that $P$ contains polynomials from equality constraints, and $Q$ consists of inequality constraints.

We define the following sets of
solutions: $ \zero(P) = \{\alphaB\in\FpO^k\mid p(\alphaB) = 0 \textrm{ for all }p\in P\} $ and $ \zeroq(P) = \{\alphaB\in\Fp^k\mid p(\alphaB) = 0 \textrm{ for all }p\in P\} $.
Clearly, $\zeroq(P) \subseteq \zero(P)$.
For simplicity,  we use set subtraction to define $\zero(P / Q) = \zero(P) \setminus \zero(Q)$ and $\zeroq(P / Q) = \zeroq(P) \setminus \zeroq(Q)$.
We further write $ \hat{P} $ for $ P\setminus\FpXkm $. 
We use the tuple $\S = (P,Q)$ to mean a \emph{(polynomial) system} and write $\zero(\S) = \zero(P / Q)$.
We finally define the projection set  $\proj\zeroq(P/Q) = \{
\alphaB\in\Fp^{k-1} \mid \exists \beta\in\Fp \text{ such that }
(\alphaB,\beta) \in  \zeroq({P/Q}) \}$. Intuitively, projection sets
are used to reduce the problem of solving polynomials over $k$
variables into the smaller problem
of solving polynomials over $k-1$
variables, providing thus means for eliminating the variable $x_k$.

\begin{definition}[Zero Decomposition]\label{def:proj}
A \emph{zero decomposition procedure} is an algorithm that given $ P, Q \subset \FpXk $ generates a set of systems $\Delta = \{ (P_1, Q_1),\dots,(P_m, Q_m) \}$ such that
\begin{equation}\label{eq:zd}
\zeroq(P / Q) = \bigcup_{(P',Q')\in\Delta} \zeroq(P'/ Q').
\end{equation}
The zero decomposition procedure is \emph{projecting} in case $P',Q'\in\FpXkm$ for all $ (P',Q')\in\Delta $~and 
\[\proj\zeroq(P / Q) = \bigcup_{(P',Q')\in\Delta} \zeroq(P'/ Q').\]
Given additionally $\alphaB\in\Fp^{k-1}$ which cannot be extended to a zero, i.e.\ there is no $\beta\in\Fp$ such that $(\alphaB,\beta) \in \zeroq(P/Q)$, we say that an algorithm is a \emph{weak projecting zero decomposition procedure for $\alphaB$} if 
\begin{equation}\label{eq:wpzd}
\proj\zeroq(P / Q) \subseteq \bigcup_{(P',Q')\in\Delta} \zeroq(P'/ Q')
\end{equation}
with $ P',Q'\in\FpXkm $ and $\alphaB \notin \zeroq(P' / Q')$ for all $(P',Q')\in\Delta$.
\end{definition}
\noindent

For many zero decomposition procedures~\cite{wang2001elimination},  the
\emph{pseudo division} operation plays an important role, as follows. 
Consider polynomials $f,g \in \FpXk$, with $f \neq 0$. %
Let $r, o \in \FpXk$ denote polynomials.
We define the \emph{pseudo-remainder formula} (in $x_k$) as
\[ l^d\cdot g = o\cdot f + r \]
where $l = \lc{f}{x_k}$, $d = \textrm{max}(\ldeg{g}{x_k}-\ldeg{f}{x_k}+1,0)$, and $\ldeg{r}{x_k} < \ldeg{f}{x_k}$.
The \emph{pseudo-remainder} $r$ and \emph{pseudo-quotient} $o$ of $g$ with respect to $f$ in $x_k$ are denoted as $\prem(g, f, x_k)$ and $\pquo(g, f, x_k)$, respectively.
The polynomials $o$ and $r$ are uniquely determined by $f$ and $g$ and
are computable~\cite{wang2001elimination}.

\begin{example}
	Let $f = x_2 + x_1$ and $g = 3x_2x_1^2 + x_1$ in $\set F_5[x_1,x_2]$. Noting that $-2 = 3$ in $\set F_5$, we have eliminated $x_2$ in both the pseudo remainder and pseudo quotient by
	\[\underbrace{(3x_2x_1^2 + x_1)}_{g} \,\,\,= \underbrace{(-2x_1^2)}_{\pquo(g,f,x_2)}\cdot\,\,\, \underbrace{(x_2+x_1)}_{f} \,\,+\,\, \underbrace{(2x_1^3+x_1)}_{\prem(g,f,x_2)}.\]
\end{example}

Calculating \glspl{gcd} is another method for reducing the degree of $x_k$ and thereby eliminating it.
We employ \glspl{srs} to calculate \glspl{gcd} with respect to a partial assignment, as shown in Lemma~2.4.2 of~\cite{wang2001elimination}.
Given two polynomials $f, g \in \FpXk$ with $\ldeg{f}{x_k}\geq
\ldeg{g}{x_k} > 0$, we denote by  $ \srs(f, g, x_k) = h_2,\dots,h_r $ the \gls{srs} of $f$ and $g$ with regard to $x_k$.
Let $l = \lc{g}{x_k}$ and $l_\ell = \lc{h_\ell}{x_k} $. Then for $2\leq\ell\leq r$, we have 
\[\gcd(f(\alphaB, x_k), g(\alphaB, x_k)) = h_\ell(\alphaB, x_k)\] 
if $\alphaB \in \zero(\{l_{\ell+1},\dots,l_r\} / \{l, l_\ell\})$.
When $f$, $g$, and $h_\ell$ are partially evaluated w.r.t. $\alphaB$, the
above \gls{gcd}  is equivalent to computing a univariate \gls{gcd}.

\begin{example}\label{ex:srs}
	Let $f = x_3^2+x_3x_2+4$ and $g = x_3x_2+x_1$ in $\set F_5[x_1,x_2,x_3]$.
	Then $\srs(f,g,x_3) = [h_2, h_3] = [x_3x_2 + x_1, -x_2^2x_1 - x_2^2 + x_1^2]$.
	Using the assignment function $\nu = \{x_2\mapsto 1, x_1\mapsto 3\}$ we have 
	$\nu(f) = x_3^2 + x_3 - 1$ and $\nu(g) = \nu(h_2) = x_3 - 2$ which is indeed the \gls{gcd} of $\nu(f) $ and $ \nu(g)$.
\end{example}

\subsection{Explaining Propagated Literals}\label{sec:explainFF}
We now show that polynomial constraints over
finite fields have finite basis explanation functions.

Our \gls{mcsat}-based theory propagation works as follows. 
Let $M$ be a level~$k$ trail, implying that variable $x_k$ is not yet assigned. Suppose $f$ is a constraint such that $f \notin \sfConstraints{M}$ and $\lnot\sfCompatible{\lnot f}{M}$.
We derive polynomial constraints that are $\false$ for $\llbracket M, \lnot f \rrbracket$ to generate an explanation clause $E$ in order to justify propagating $f$. 
To enable an instant application of \TPropRule, we ensure that for all $e\in E$ either $e=f$, $\sfLevel{e} < k$, or
$\lnot e\in\sfConstraints{M}$ must hold.

\newcommand{\Ap}{\ensuremath{A_{=}}}
\newcommand{\An}{\ensuremath{A_{\neq}}}
\newcommand{\A}{\ensuremath{\mathcal{A}}}

\paragraph{Finite Basis Explanations.} Towards generating
explanation clauses $E$, we consider the polynomial constraint systems
\begin{equation}\label{eqA}
  A = \{f'\in\sfConstraints{M} \mid \sfLevel{f'} = k\} \cup \{\lnot f\}~\text{and}~
A_\vartriangleright = \{p \mid (p \vartriangleright 0) \in A\} \text{~for~}
\vartriangleright \in \{=, \neq\}.
\end{equation}
Further, we fix the 
system $\A = (\Ap,\An)$. 
From $\lnot\sfCompatible{\lnot f}{M}$ follows that
$(\alphaM,\beta)\notin\zeroq(\A)$ for all $\beta\in\Fp$.
 Based on Definition~\ref{def:proj}, we use a weak projecting
 zero decomposition procedure for $\alphaM$ and decompose $\A$ into multiple systems $\A_1,\dots,\A_r$ such that for every $1\leq\ell\leq r$ we have that $\alphaM\notin\zero(\A_\ell)$.
Then each $\A_\ell = (P_\ell,Q_\ell)$ contains (at least) one polynomial $u$ in $\FpXkm$ that excludes $\alphaM$.
Depending on whether $ u \in P_\ell $ or $u \in Q_\ell$, we generate an appropriate constraint $c_\ell$ as $u=0$ or $u\neq 0$, respectively, to ensure that $c_\ell(\alphaM) = \sfFalse$.
As a result, we set  $C = \{c_1,\dots,c_r\}$.

For any $ (\alphaB,\beta)\in\zeroq(\A)$, by Definition~\ref{def:proj} we have that $ \alphaB \in\zeroq(\A_\ell) $ for some $ 1\leq \ell\leq r $ and thus $ c_\ell(\alphaB) = \sfTrue $.
Hence, anytime an assignment function fulfills all constraints from $A$, it also fulfills at least one constraint of $C$.
We generate the explanation clause $E = \{\lnot a \mid a\in A\} \cup C$.

\begin{theorem}[Explanation Clause $E$]\label{thm:exp} Given a trail $M$ of level $k$. Let $f$ be a constraint such that $f \notin \sfConstraints{M}$ and $\lnot\sfCompatible{\lnot f}{M}$.
	Further let $A = \{f'\in\sfConstraints{M} \mid \sfLevel{f'} = k\} \cup \{\lnot f\}$ and $C = \{c_1,\dots,c_r\}$ constructed as defined above. Then 
	$E = \{\lnot a \mid a\in A\} \cup C$ is an explanation clause for $f$ in $M$.
\end{theorem}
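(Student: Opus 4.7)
The statement has two halves corresponding to the two clauses of Definition~\ref{def:expclause}: that $E$ is a valid theory lemma, and that $E$ justifies $f$ in $M$. I will dispatch the justification clause first since it is essentially bookkeeping, and treat validity as the main work.

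\textbf{Justification.} That $f\in E$ is immediate, because $\lnot f\in A$ by the definition of $A$, so $\lnot(\lnot f)=f$ lies in $\{\lnot a\mid a\in A\}\subseteq E$. For the remaining literals I split $E\setminus\{f\}$ into two families. If $f'=\lnot a$ for some $a\in A\setminus\{\lnot f\}$, then by construction $a\in\sfConstraints{M}$ with $\level{a}=k$, so $\sfValue{a}{M}=\true$ and hence $\sfValue{\lnot a}{M}=\false$. If $f'=c_\ell\in C$, then $\poly{c_\ell}\in\FpXkm$ so every variable of $c_\ell$ has class at most $k-1$ and is already assigned by $\nu_M$; the construction of $c_\ell$ from the polynomial $u$ that excludes $\alphaM$ (either $u=0$ with $u(\alphaM)\neq0$, or $u\neq 0$ with $u(\alphaM)=0$) yields $\nu_M(c_\ell)=\false$, i.e.\ $\sfValue{c_\ell}{M}=\false$.

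\textbf{Validity.} Assume for contradiction that some assignment $\nu$ makes $\nu(E)=\false$. Then $\nu(a)=\true$ for every $a\in A$ and $\nu(c_\ell)=\false$ for every $\ell$. The first condition means the tuple $(\nu(x_1),\dots,\nu(x_k))$ lies in $\zeroq(\A)$, because precisely the polynomials $\Ap$ vanish and the polynomials $\An$ are non-zero at that tuple. Consequently, its projection $\xiB=(\nu(x_1),\dots,\nu(x_{k-1}))$ belongs to $\proj\zeroq(\A)$. Applying the weak projecting zero decomposition property~\eqref{eq:wpzd} yields an index $\ell$ with $\xiB\in\zeroq(\A_\ell)$. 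By construction of $c_\ell$ from a polynomial $u\in P_\ell\cup Q_\ell$, whichever case occurs gives $\nu(c_\ell)=\true$ (if $u\in P_\ell$ then $\nu(u)=0$; if $u\in Q_\ell$ then $\nu(u)\neq 0$), contradicting $\nu(c_\ell)=\false$. Hence no such $\nu$ exists and $E$ is a valid theory lemma.

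\textbf{Main obstacle.} The only delicate step is the validity argument, and the crux is aligning the set-theoretic statement of Definition~\ref{def:proj} with the logical content of the constraints. I must be careful that (i) $A$ is built using $\lnot f$ (not $f$), so that infeasibility of $\llbracket M,\lnot f\rrbracket$ is exactly $\alphaM\notin\proj\zeroq(\A)$, thereby legitimising the appeal to the \emph{weak} projecting version specific to $\alphaM$, and (ii) each $c_\ell$ is taken from a polynomial witnessing $\alphaM\notin\zeroq(\A_\ell)$ so that simultaneously $c_\ell(\alphaM)=\false$ (needed in the justification step) and $c_\ell(\xiB)=\true$ for every $\xiB\in\zeroq(\A_\ell)$ (needed in the validity step). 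Beyond this, the argument is bookkeeping; no further theory-specific machinery is required, since the existence and projecting behaviour of the decomposition are supplied by Definition~\ref{def:proj} and the explicit procedures constructed in Section~\ref{sec:mcSat:exp}.
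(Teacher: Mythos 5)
Your proof is correct and follows essentially the same route as the paper's: the justification half is the same bookkeeping, and the validity half is the paper's Case~1/Case~2 split merely reorganized as a proof by contradiction. If anything, you are slightly more careful than the paper in two places — you correctly conclude $\sfValue{\lnot a}{M}=\false$ from $a\in\sfConstraints{M}$ (the paper's wording inverts this), and you explicitly project the $k$-tuple down to $\xiB\in\Fp^{k-1}$ before invoking the weak projecting property~\eqref{eq:wpzd}, whereas the paper applies the decomposition to the $k$-tuple directly.
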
 
\begin{proof}
	By Definition~\ref{def:expclause} we show that $E$ is a valid theory lemma and justifies $f$~in~$M$.
	
	By construction we have that $\lnot f\in A$ and thus $f\in E$.
	Let $a\in A$ be a constraint such that $a \neq \lnot f$. Then $\lnot a \in \sfConstraints{M}$, therefore, $\sfValue{a}{M} = \sfFalse$.
	Let $c\in C$, from the construction of $C$ it immediately follows that $\sfLevel{c} < k$ and $c(\alphaM) = \sfFalse$, thus $\nu_{M}(c) = \sfFalse$.
	Since all constraints in $E$ but $f$ evaluate to $\sfFalse$ under $M$, we derive that \emph{$E$ justifies $f$ in $M$}.
	
	Let $\nu$ be an arbitrary assignment.
	We distinguish the following two cases:
	\begin{enumerate}[label={\it Case \arabic*:},wide=0em,leftmargin=0em]
		\item Assume $\nu(\lnot a) = \sfFalse$ for all $a \in A$.
		Let $\A = (\Ap,\An)$ as defined in (\ref{eqA}) and $\alphaB$ be $\nu$ represented as a $k$-tuple.
		Since $\nu(a) = \sfTrue$ for all $a\in A$, we have  $\alphaB\in\zero(\A)$.
		Since $\A$ was zero decomposed into systems $\A_1,\dots,\A_r$, there exists $1\leq i \leq r$ such that $\alphaB\in\zero(\A_i)$.
		Thus $c_i(\alpha) = \sfTrue$ for $c_i\in C$. As $C\subseteq E$, it follows that $\nu(E) = \sfTrue$.
		\item Assume $\nu(\lnot a) \neq \sfFalse$ for some
                  $a\in A$. As $\lnot a \in E$,  we obtain $\nu(E) \neq \sfFalse$.
                \end{enumerate}
                
	As $\nu(E) \neq \sfFalse$ in both of the cases above, we
        conclude that $E$ is a valid lemma.
\end{proof}

\begin{example}\label{ex:explain}
(Example~\ref{ex:motivating}) We have $\set F_5[x_1,x_2]$ and two %
	unit clauses $C_1 = \{c_1\} = \{x_1^2-1=0\} $ and $C_2 = \{c_2\} = \{ x_1x_2-x_2-1=0\}$.
	Assume the current trail is $ M = \llbracket x_1^2 - 1 = 0,\, x_1\mapsto 1\rrbracket $.
	We cannot add $c_2$ as we have $\lnot \sfCompatible{c_2}{M}$.
	Towards a conflict, we propagate $\lnot c_2$.
	Then $A = \{x_1x_2-x_2-1 = 0\}$, $\Ap = \{x_1x_2-x_2-1\}$, and $\An = \emptyset$.
	Using a weak zero decomposition procedure (cf. Example~\ref{ex:sdecomp}),
	we derive the zero decomposition $\Delta = \{ (\emptyset, \{x_1 - 1\}) \}$ and generate $E = \{\lnot c_2, x_1 - 1 \neq 0\}$ to justify $\lnot c_2$ on $M$.
	However, $\lnot c_2$ on $M$ results in a conflict with~$C_2$.
	Thus, we resolve $E$ with $C_2$ and learn that $x_1 - 1 \neq 0$ must hold.
	We backtrack the assignment of $x_1$ and end up with $ M = \llbracket x_1^2 - 1 = 0, x_1 - 1 \neq 0 \rrbracket $.
	Assigning $x_1 \mapsto 4$, we eventually reach \SatRule with $M = \llbracket c_1, x_1 - 1 \neq 0, x_1 \mapsto 4, c_2, x_2 \mapsto 2 \rrbracket $ and $\nu_M = \{x_1\mapsto 4, x_2\mapsto 2\}$.
\end{example}

We next show that our explanations from Theorem~\ref{thm:exp} can be
turned into explanations with finite basis. As such, 
using our explanation functions  in
the \TPropRule rule ensures that 
\gls{mcsat} terminates for our theory (Section~\ref{sec:mcSat:exp}).

\begin{theorem}[Finite Based Explanations]\label{thm:exp:finite}
	Every explanation function for theory of $ \FpX $ can be finite based.
\end{theorem}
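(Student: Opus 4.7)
\textbf{Proof plan for Theorem~\ref{thm:exp:finite}.}
The plan is to exploit the fundamental fact that functions $\mathbb{F}_q^n \to \mathbb{F}_q$ form a \emph{finite} set, so every polynomial in $\FpX$ is equivalent, as a function on $\mathbb{F}_q$-assignments, to one of only finitely many ``canonical'' polynomials. We then show that replacing each polynomial in an explanation clause by its canonical form yields a logically equivalent explanation clause whose literals are drawn from a finite basis.

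First I would introduce the canonical reduction. For each variable $x_i$, Fermat's little theorem (applied componentwise in $\mathbb{F}_q$) gives $\alpha^q = \alpha$ for every $\alpha \in \mathbb{F}_q$, so the field polynomial $x_i^q - x_i$ vanishes on all of $\mathbb{F}_q$. Given any $p \in \FpX$, successively replace each occurrence of $x_i^q$ (with $i = 1,\dots,n$) by $x_i$ until every variable has degree strictly less than $q$. Call the result $\overline{p}$. By construction $\nu(p) = \nu(\overline{p})$ for every assignment $\nu : X \to \mathbb{F}_q$, so the constraint $\overline{p} \vartriangleright 0$ is equivalent to $p \vartriangleright 0$ on every $\nu$. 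The set $\mathcal{B}_n$ of such reduced polynomials in $n$ variables is finite: there are only $q^n$ monomials with every exponent below $q$, so $|\mathcal{B}_n| \leq q^{q^n}$. Consequently, the set of reduced constraints over $\FpX$ is finite.

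Next I would lift this to explanation functions. Given any explanation function $\mathsf{exp}$, define $\mathsf{exp}'(f, M)$ as follows: compute $E = \mathsf{exp}(f, M)$, and return $\overline{E} := \{\overline{g} \vartriangleright 0 \mid (g \vartriangleright 0) \in E\}$. I then need to verify that $\overline{E}$ is still an explanation clause for $f$ in $M$ in the sense of Definition~\ref{def:expclause}. Validity as a theory lemma is immediate: for any assignment $\nu$, replacing each constituent polynomial by its reduction does not change $\nu(E)$, so $\nu(\overline{E}) = \nu(E) \neq \sfFalse$. Justification of $f$ in $M$ uses the same observation on the partial assignment $\nu_M$ (and on the bookkeeping of opposite literals in $\sfConstraints{M}$, which we may also reduce once and for all): each $\sfValue{\cdot}{M}$ computation depends only on how polynomials evaluate on $\mathbb{F}_q$, which is invariant under the reduction. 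Finally, $f \in E$ entails $\overline{f} \in \overline{E}$, and $\overline{f}$ is logically identical to $f$ as a constraint, so we may identify them.

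Since every literal of $\overline{E}$ has the form $\overline{g} \vartriangleright 0$ with $\overline{g} \in \mathcal{B}_n$, the new literals of $\mathsf{exp}'$ are drawn from the finite set $\{\overline{g} = 0, \overline{g} \neq 0 \mid \overline{g} \in \mathcal{B}_n\}$. Hence $\mathsf{exp}'$ is a finite basis explanation function equivalent to $\mathsf{exp}$. The main subtlety is not computational but conceptual: making sure that ``equivalence on $\mathbb{F}_q$-assignments'' (which is all that the semantics $\sfValue{\cdot}{M}$ and the validity predicate care about) is the correct notion of equivalence to quotient by, so that reduction preserves the two conditions of Definition~\ref{def:expclause}. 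Once that is articulated, the argument is essentially a counting observation.
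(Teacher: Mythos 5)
Your proposal is correct and follows essentially the same route as the paper's own proof: both reduce exponents via Fermat's little theorem ($a^q = a$ for all $a \in \Fp$) so that every polynomial is functionally equivalent on $\Fp$-assignments to one from a finite set, and then conclude by counting. Your write-up is in fact slightly more careful than the paper's, since you explicitly verify that the reduction preserves both conditions of Definition~\ref{def:expclause} (validity and justification), a step the paper leaves implicit.
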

\noindent

\begin{proof}
  The proof relies on application of Fermat's little theorem.
	We show that every polynomial $p$ in the explanation clause can be translated to an equivalent polynomial $p'$ from a finite basis.
	Given $\Fp$, by the generalized Fermat's little theorem, every element $a\in\Fp$ satisfies $a^q \equiv a$.
	Let $ t = c\prod_{i=1}^{r}x_i^{p_i} $ be a term of $p$.
	Then by Fermat's little theorem an equivalent term $t'$ can be found such that $p_i\leq q$ for all $1\leq i\leq r$.
	When $ c\in\Fp $, $r$ is finite, and $d_i \leq q$ for $1\leq i\leq r$, there is only a finite set $T$ of different terms.
	As a polynomial is a sum of terms, there are $2^{|T|}-1$ different polynomials that can be constructed from $T$.
	By replacing all terms of $p$ by an equivalent term from $T$, we have an equivalent polynomial $p'$ from a finite basis.
\end{proof}

\section{Explanation Functions over Finite Fields in \glsentryname{mcsat}}\label{sec:mcSat:exp}
Section~\ref{sec:explain} established the generation of explanation clauses  for the theory of polynomials over finite fields (Theorem~\ref{thm:exp}) and proved the existence of a finite basis explanation function (Theorem~\ref{thm:exp:finite}).

The primary component of the provided explanation generation procedure is a weak projecting zero decomposition procedure. 
It remains to show in this section that such a procedure exists. 
We provide a novel method for giving such a decomposition that does not rely on field polynomials, as in~\cite{hader-smt22}.
This is the last piece to providing a finite basis explanation function and thus to turning the \gls{mcsat} calculus of Figure~\ref{fig:rules} into an SMT solving approach over finite fields.

\paragraph{Projecting Zero Decomposition.}
Recall the generation of finite basis explanations in Section~\ref{sec:explainFF} for a trail $M$ of level $k$. Again, we have $\alphaM \in \Fp^{k-1}$ and $\A = (\Ap,\An)$ be the polynomial system of constraint set $A$ as defined in (\ref{eqA}), such that $(\alphaM,\beta)\notin\zero(\A)$ for all $\beta\in\Fp$.
Depending on $|A|$, $|\Ap|$, and $|\An|$, we utilize different
projecting procedures to find explanation clauses $E$. 
Each procedure takes $\mathcal{A}$ and $\alphaM$ as input and
decomposes $\mathcal{A}$ in the set of systems $\Delta =
\{\A_1,\dots,\A_r\}$, according to Definition~\ref{def:proj}. 
By the construction of $E$, it thus suffices to return one
constraint $f_\ell$ of each system $\A_\ell\in\Delta$ such that
$f_\ell(\alphaM)=\sfFalse$.

Based on the structure of $A$, we use single
polynomial projections (Section~\ref{sec:exp:singlePoly})
or \gls{srs}-based projections (Section~\ref{sec:exp:srs}) to derive
the explanation constraints $f_\ell$ of each system~$\A_\ell$.

\newcommand{\projCoeff}{\ensuremath{\mathsf{Proj_{Coeff}}}}
\newcommand{\projReg}{\ensuremath{\mathsf{P_{Reg}}}}
\newcommand{\projTri}{\ensuremath{\mathsf{P_{Tri}}}}

\subsection{Single Polynomial Projection for Deriving Explanation Constraints}\label{sec:exp:singlePoly}
In case $|A| = 1$ the coefficients of the polynomial constraint $f \in A$ can be used for projecting. 
By the construction of $A$ we have that $A = \{\lnot f\}$, $\level{\lnot f} = k$ and $\alphaM$ cannot be extended to satisfy $\lnot f$.
We write $\poly{\lnot f} = c_1 \cdot x_k^{d_1} + \dots + c_m \cdot x_k^{d_m}$.
By the definition of this polynomial,  we have that each $c_i\in\FpXkm$
and thus $c_i$ can be fully evaluated by $\alphaM$.
Let $\gamma_i = c_i(\alphaM)$ for $1\leq i \leq m$ and set $F = \{c_i - \gamma_i \neq 0 \mid 1\leq i \leq m\}$.
Each $f_\ell\in F$ represents one (single-polynomial) system which is returned by a zero decomposition procedure.
We denote this procedure as $\projCoeff$ and prove the following.
\begin{theorem}[Single Polynomial Weak Projection]\label{thm:projCoeff}
	Let $\A$ be a polynomial system with a single polynomial $a \in\FpXk$ and let $\alphaB\in\FpXkm$ be an assignment that cannot be extended to be a zero of $\A$.
	Then $\projCoeff(\A,\alphaB)$ is a weak projecting zero decomposition procedure for $\alphaB$.
\end{theorem}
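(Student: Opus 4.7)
The plan is to verify the three conditions of Definition~\ref{def:proj} for weak projecting zero decomposition. Let $a$ be the unique polynomial of $\A$; so either $a \in \Ap$ or $a \in \An$. Writing $a = c_1 x_k^{d_1} + \cdots + c_m x_k^{d_m}$ with $c_i \in \FpXkm$, setting $\gamma_i = c_i(\alphaB)$, and defining the returned decomposition as $\Delta = \{(\emptyset, \{c_i-\gamma_i\}) \mid 1 \le i \le m\}$, the first two conditions are immediate: each $c_i-\gamma_i$ lies in $\FpXkm$, and since $(c_i-\gamma_i)(\alphaB) = 0$, we have $\alphaB \in \zeroq(\{c_i-\gamma_i\})$, hence $\alphaB \notin \zeroq(\emptyset/\{c_i-\gamma_i\})$.

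The heart of the proof is the inclusion $\proj\zeroq(\A) \subseteq \bigcup_{\A_\ell \in \Delta} \zeroq(\A_\ell)$. I would argue this by contraposition. Take an arbitrary $\alphaB' \in \Fp^{k-1}$ that does \emph{not} belong to any $\zeroq(\emptyset/\{c_i-\gamma_i\})$; then $c_i(\alphaB') = \gamma_i = c_i(\alphaB)$ for every $i$. Consequently the two univariate polynomials $a(\alphaB,x_k)$ and $a(\alphaB',x_k)$ in $\Fp[x_k]$ are identical, so they have the same root set in $\Fp$.

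From here I split on whether $a$ arose from an equality or an inequality constraint. If $a \in \Ap$, extending $\alphaB'$ to $(\alphaB',\beta) \in \zeroq(\A)$ would require $a(\alphaB',\beta) = 0$, hence $a(\alphaB,\beta) = 0$, contradicting the hypothesis that $\alphaB$ cannot be extended to a zero of $\A$. Symmetrically, if $a \in \An$, then extending $\alphaB'$ would require $a(\alphaB',\beta) \neq 0$, which again forces $a(\alphaB,\beta) \neq 0$ and contradicts the non-extendability of $\alphaB$. In both cases $\alphaB' \notin \proj\zeroq(\A)$, establishing the inclusion.

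The only delicate point is keeping the two constraint polarities straight: one has to recognise that equality of $a(\alphaB,x_k)$ and $a(\alphaB',x_k)$ as polynomials in $\Fp[x_k]$ preserves both the set of roots and the set of non-roots in $\Fp$, so the same $\beta$ witnesses extendability for $\alphaB$ whenever it does for $\alphaB'$. No zero-decomposition machinery beyond Definition~\ref{def:proj} is required; the argument is essentially that the coefficients of $a$ in $x_k$, viewed modulo the assignment $\alphaB$, already encode all information about extendability to $x_k$.
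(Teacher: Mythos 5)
Your proof is correct and follows essentially the same route as the paper's: both arguments hinge on the observation that if every coefficient $c_i$ of $a$ in $x_k$ takes the same value under another assignment as under $\alphaB$, then the resulting univariate polynomials in $\Fp[x_k]$ coincide, so non-extendability of $\alphaB$ transfers and forces membership in some $\zeroq(\emptyset/\{c_i-\gamma_i\})$. You phrase it contrapositively and make the equality/inequality polarity split explicit (the paper argues by contradiction on a $\xiB\in\zeroq(\A)$ and leaves the polarity implicit), but these are presentational differences, not a different argument.
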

\newcommand{\F}{\ensuremath{\mathcal{F}}}
\begin{proof}
	Termination of $\projCoeff$ is obvious.
	Let $a = c_1 \cdot x_k^{d_1} + \dots + c_m \cdot x_k^{d_m}$.
	Furthermore, there is no $\beta\in\Fp$ such that $(\alphaB,\beta) \in\zeroq(\A)$.
	Then $\projCoeff(\A, \alphaB)$ returns a set of systems $\Delta = \{(\emptyset, \{c_i - \gamma_i\}) \mid 1\leq i \leq m\}$ where $\gamma_i\in\Fp$ is $c_i(\alphaB)$.
	Let $\xiB = (\xi_1,\dots,\xi_k)\in\zeroq(\A)$. %
	Towards a contradiction, assume that $\xiB \notin \zeroq(\S)$ for all $\S\in\Delta$.
	Then for all $c_i$, we have that $c_i(\xiB) = \gamma_i = c_i(\alphaB)$, i.e.\  all coefficients of $a$ evaluate to the same value for $\alphaB$ and $\xiB$.
	As there is no value to be assigned to $x_k$ such that  $\alphaB$ can be extended to a zero of $\A$, $\xiB$ cannot exist, as $\xi_k$ would extend $\alphaB$.
	Therefore, $\xiB\in\zeroq(\S)$ for some $\S\in\Delta$.
	Furthermore, note that $\alphaB$ is excluded from all systems in $\Delta$ by construction.
\end{proof}

\begin{example}\label{ex:sdecomp}
	Filling the gap in Example~\ref{ex:explain}, we use $\projCoeff$ to decompose $\A = (\{x_1x_2-x_2-1\}, \emptyset)$ with $\alphaB = (1)$.
	Let $a$ be the polynomial in $\A$. We write $a = (x_1-1)x_2^1 + (-1)x_2^0$.
	Evaluating the coefficients, we get $\gamma_1 = 0$, $\gamma_0 = -1$ and generate 
	$F = \{(x_1-1)-0 \neq 0, (-1)-(-1) \neq 0\}$.
	As there are no zeros in the second system, we return $\Delta = \{(\emptyset,\{x_1-1\})\}$.
\end{example}

\subsection{\glsentryname{srs}-Based Projection for Deriving Explanation Constraints}\label{sec:exp:srs}
For $|A| > 1$, we use the procedure $\projReg(\A,\alphaB)$ as
shown in
Algorithm~\ref{alg:reg} and described next.
Algorithm~\ref{alg:reg} is a weak projecting zero decomposition procedure for $\alphaB$ that decomposes the system $\A$.
It utilizes \gls{srs} chains to calculate \glspl{gcd} that reduce the degree of $x_k$. This idea is based on the algorithm \textsf{RegSer} presented in~\cite{wang2001elimination,wang2000computing}.
While this original work presents \textsf{RegSer} for polynomials over
fields with characteristic $0$ only, the work
of~\cite{li2010decomposing} claims validity of  the approach also over
finite fields. Algorithm~\ref{alg:reg} relies on this result and
proceeds as follows. 

Consider two polynomials $p_1,p_2\in\FpXk$ with $\lv{p_{1}} = \lv{p_2}
= x_k$ and  let $h_2,\dots,h_r = \srs(p_1,p_2,x_k)$. Further, let $l_i = \lc{h_i}{x_k}$ for all $2\leq i \leq r$.
Then, by case distinction over the evaluation of $l_2,\dots,l_r \in \FpXkm$, the set of zeros can be decomposed to guarantee that each $h_i$ is a \gls{gcd} of $p_1$ and $p_2$ in one newly generated system. The \gls{gcd} property is then used to reduce $\ldeg{p_{1}}{x_k}$ and $\ldeg{p_{2}}{x_k}$ in this system.
The original approach of~\cite{wang2001elimination,wang2000computing} splits the zero set for every $h_i$ and thus generates exponentially many systems.
In our setting, a full decomposition can be avoided by guiding the search using $\alphaB$. 
This is done by evaluating $l_2,\dots,l_r$ with $\alphaB$ and not further exploring systems that already exclude $\alphaB$.
Therefore, only a linear amount of systems is generated in Algorithm~\ref{alg:reg}.
This computation is performed until a polynomial is found that excludes $ \alphaB $. In case there are polynomials in $x_k$ left, we exclude them using $\projCoeff$.

The \textbf{return}~\textit{call} \textbf{if}~\textit{c} statements of
Algorithm~\ref{alg:reg}, where \textit{call} is a recursive call and the \emph{guard} \textit{c} is a polynomial constraint, are used to track which path the search takes.
If $c$ evaluates to \sfTrue\xspace under $\alphaB$ then the recursive call is performed and its result is returned.
In addition, the procedure keeps a set of tracked constraints $C$ that is empty in beginning. Whenever a guard $c$ is reached but \sfFalse\xspace under $\alphaB$, it is added to $C$, otherwise, $\lnot c$ is added.
The constraints in $C$ are added to the returned sets accordingly.
Thus, the constraints in $C$ describe the search space that was not visited during the search.

\newcommand{\ReturnC}[2]{\KwRet #2 \textbf{if} #1}
\begin{algorithm}[!tb]
	\small
	\DontPrintSemicolon
	\caption{\projReg($\A = (P,Q)$, $\alphaB$)}
	\label{alg:reg}
	\Return $(\{p\}, \emptyset)$ for any $p\in P$ with $\lv{p} < x_k$ and $p(\alphaB) \neq 0$\label{alg:reg:ret1}\;
	\Return $(\emptyset, \{q\})$ for any $q\in Q$ with $\lv{q} < x_k$ and $q(\alphaB) = 0$\label{alg:reg:ret2}\;

		\If{$|\hat{P}| > 0$}{
			select $p\in P$ with the smallest positive $\ldeg{p}{x_k}$\label{alg:reg:a}\;

			\ReturnC{${\lc{p}{x_k}(\alphaB) = 0}$}{\projReg($P\setminus\{p\}\cup \{\red(p,x_k)\}, Q, \alphaB$)}\label{alg:reg:5}\;

			\If{$|\hat{P}| > 1$}{\label{alg:reg:ps}
				select any $p'\in \hat{P}\setminus\{p\}$\;
				compute $h_2,\dots,h_r = \srs(p',p,x_k)$ and let $l_i = \lc{h_i}{x_k}$ for $2\leq i\leq r$\;

				\If{$\lv{h_r} < x_k$}{
				  \ReturnC{$l_{r-1}(\alphaB) \neq 0$}{\projReg($P\setminus\{p,p'\}\cup\{h_r, h_{r-1}\}, Q, \alphaB$)}\label{alg:reg:special1}\;
				  $r\gets r - 2$\;
				}

				\lFor{$i=r,\dots,2$}{\label{alg:reg:pe}%
				  \ReturnC{$l_i(\alphaB) \neq 0$}
				  {\projReg($P\setminus\{p,p'\}\cup\{h_i,l_{i+1},\dots,l_r\}, Q, \alphaB$)}
				}

			}
			\ElseIf{$|\hat{Q}| > 0$ and $\lv{p} = x_k$}{
				select any $q\in \hat{Q}$\label{alg:reg:qs}\;
				compute $h_2,\dots,h_r = \srs(q,p,x_k)$ and let $l_i = \lc{h_i}{x_k}$ for $2\leq i\leq r$\label{alg:reg:ex}\;

				\If{$\lv{h_r} < x_k$}{
					\ReturnC{$l_r(\alphaB) \neq 0$}{\projReg($P\setminus\{p\}\cup\{\pquo(p,h_r,x_k)\}, Q\setminus\{q\}, \alphaB$)}\label{alg:reg:special2}\;
					$r\gets r - 1$\;
				}
				\lFor{$i=r,\dots,2$}{%
				  \ReturnC{$l_i(\alphaB) \neq 0$}{\projReg($P\setminus\{p\}\cup\{\pquo(p,h_i,x_k), l_{i+1},\dots,l_r\}, Q, \alphaB$)}
				}\label{alg:reg:qe}
			}

			\Return $ \projReg((P,Q) \cup \projCoeff(p, \alphaB),\alphaB) $\label{alg:reg:end1}\;
		}
		\ElseIf{$|\hat{Q}| > 0$}{
		  \lForAll{$q\in\hat{Q}$}{
		    \ReturnC{$\lc{q}{x_k}(\alphaB) = 0$}{\projReg($P, Q\setminus\{q\}\cup\{\red(q, x_k)\}, \alphaB$)}\label{alg:reg:qq}
		  }
		\Return $ \projReg((P,Q\setminus \hat{Q}) \cup \projCoeff(\prod_{q'\in \hat{Q}} q', \alphaB),\alphaB) $
		}
	
	\end{algorithm}

Recall the notion of $\hat{P}, \hat{Q}$ from Section~\ref{sec:zeroFF}.
Lines~\ref{alg:reg:ret1}--\ref{alg:reg:ret2}  of Algorithm~\ref{alg:reg} return an excluding polynomial in case one is found.
Line~\ref{alg:reg:5} ensures that $\lc{p}{x_2} \neq 0$ which is a requirement for any further \gls{gcd} operation.
Lines~\ref{alg:reg:ps}-\ref{alg:reg:pe} are used to remove polynomials of $P$ until only one is left, which is then used in lines~\ref{alg:reg:qs}-\ref{alg:reg:qe} to remove polynomials from $Q$.
Lines~\ref{alg:reg:special1} and \ref{alg:reg:special2} handle the special case  $\lv{h_r} < x_k$. By definition of \gls{srs}, $\lv{h_i} = x_k$ for $2\leq i \leq r-1$, but not necessarily for $h_r$.
Roughly speaking, $\lv{h_r} < x_k$ denotes a constant \gls{gcd} and, thus, divisor-free polynomials.
Line~\ref{alg:reg:qq} splits elements in $Q$ to remove $x_k$ in case all polynomials $p\in P$ are free of~$x_k$.

\begin{example}
	Assume we have the system $\A = (\{x_3^2+x_3x_2+4\},\{x_3x_2+x_1\})$
	in $\set F_5[x_1,x_2,x_3]$ and let $\alphaB = (3,1)$.
	At line~\ref{alg:reg:ex}, $\projReg$ we will calculate the first SRS according to Example~\ref{ex:srs}.
	Eventually, the computation terminates with a zero decomposition represented by the constraints $\{x_2 = 0, -x_2^2x_1 - x_2^2 + x_1^2\neq 0, -x_2^4 + 2x_2^2x_1 \neq 0\}$, each representing one generated system.
\end{example}

\begin{theorem}[\gls{srs}-Based Weak Projection]\label{thm:srs}
	Let $\A$ be a polynomial system and let $\alphaB\in\Fp^{k-1}$ be an assignment that cannot be extended to be a zero of $\A$.
	Then $\projReg(\A, \alphaB)$ of Algorithm~\ref{alg:reg} is a weak projecting zero decomposition of $\A$ for~$\alphaB$.
\end{theorem}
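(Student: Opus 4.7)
The proof naturally decomposes into three parts: termination, correctness of the zero decomposition equation~(\ref{eq:wpzd}), and the two weak projection conditions ($P',Q'\in\FpXkm$ and $\alphaB \notin \zeroq(P'/Q')$ for each returned system). For termination, the plan is to define a lexicographic well-founded measure such as $(|\hat{P}|,|\hat{Q}|,\sum_{p\in\hat{P}\cup\hat{Q}}\ldeg{p}{x_k})$ and verify that each recursive call strictly decreases it. The reductum step of line~\ref{alg:reg:5} lowers $\ldeg{p}{x_k}$ while keeping $|\hat{P}|$ unchanged (or dropping it when the degree reaches zero). Each SRS-based branch replaces two polynomials $p,p'$ from $\hat{P}$ with polynomials of strictly lower $x_k$-degree (namely $h_i$ together with the leading coefficients $l_{i+1},\dots,l_r$ that lie in $\FpXkm$), so $|\hat{P}|$ drops. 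The $Q$-branches similarly reduce $\ldeg{p}{x_k}$ via the pseudo-quotient, and the fall-through calls to $\projCoeff$ produce systems whose polynomials live entirely in $\FpXkm$, which triggers the base cases on the next recursion.

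The heart of the proof is correctness of the zero decomposition, which I would establish by induction on this termination measure. The base cases at lines~\ref{alg:reg:ret1}--\ref{alg:reg:ret2} are immediate: a single $x_k$-free polynomial separating $\alphaB$ from any would-be zero is returned. The reductum step is justified since $\lc{p}{x_k}(\alphaB)=0$ implies $p(\alphaB,x_k) = \red(p,x_k)(\alphaB,x_k)$, so the zero sets agree after restriction to $\alphaB$. The crucial SRS steps exploit the GCD characterization recalled in Section~\ref{sec:zeroFF}: whenever $\alphaB$ satisfies $l_i(\alphaB)\neq 0$ while the higher $l_j$ vanish, the polynomial $h_i$ specialises to $\gcd(p(\alphaB,x_k),p'(\alphaB,x_k))$, and can therefore safely replace $p$ and $p'$ in $\hat{P}$ (or reduce $p$ via its pseudo-quotient in the $Q$-branch). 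The guards on $l_i(\alphaB)\neq 0$ select precisely the branch whose GCD property applies at $\alphaB$, while the unused branches correspond to systems implicitly belonging to the decomposition $\Delta$, each witnessed by the negation of the skipped guard. Together with the explicitly returned system, the guard constraints accumulated along the search cover $\proj\zeroq(\A)$. This step invokes the finite-field adaptation of the classical \textsf{RegSer} procedure of~\cite{wang2001elimination,wang2000computing} established in~\cite{li2010decomposing}.

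The two weak projection conditions then follow from structural observations on the return statements. Every terminating branch returns a system whose polynomials lie in $\FpXkm$: either the single $x_k$-free polynomial from the base cases, or the output of $\projCoeff$, whose coefficients of $x_k$ are by definition elements of $\FpXkm$. Every such system also excludes $\alphaB$: the return criteria at lines~\ref{alg:reg:ret1}--\ref{alg:reg:ret2} enforce this directly, and for the fall-through $\projCoeff$ calls it follows from Theorem~\ref{thm:projCoeff}. The main obstacle will be the bookkeeping of the implicit decomposition $\Delta$ across the non-taken branches of the guarded returns: one has to argue rigorously that the guards collected along the traversal, combined with the single explicitly returned system, indeed cover $\proj\zeroq(\A)$ so that~(\ref{eq:wpzd}) holds. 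Reducing this argument to a case-by-case invocation of the GCD identities from~\cite{wang2001elimination,li2010decomposing} at each SRS branch point, and to Theorem~\ref{thm:projCoeff} for the $\projCoeff$ fall-throughs, is where the technical work concentrates.
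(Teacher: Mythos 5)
Your proposal is correct and follows essentially the same route as the paper's proof: termination via decreasing $x_k$-degrees and cardinalities of $\hat{P},\hat{Q}$, correctness of the decomposition by appeal to the \textsf{RegSer}/\gls{srs} \gls{gcd} lemmas of~\cite{wang2001elimination,wang2000computing} (with the finite-field validity from~\cite{li2010decomposing}), the covering of $\proj\zeroq(\A)$ by the explicitly returned system together with the negated guards of the skipped branches, and Theorem~\ref{thm:projCoeff} for the $\projCoeff$ fall-throughs. Your explicit lexicographic termination measure and your identification of the guard-bookkeeping as the technical crux are, if anything, slightly more precise than the paper's own presentation.
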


\begin{proof}
	We show that $\projReg(\A, \alphaB)$ terminates and is a weak projecting zero decomposition for $ \alphaB $.
	
	\medskip
	\noindent \textit{Termination:} As the first two loops in Algorithm~\ref{alg:reg} are bound by the size of the \gls{srs} decomposition $r$ and the size of a \gls{srs} is bound, both loops certainly terminate.
	The third and last loop iterates over the finite amount of elements in $\hat{Q}$ and thus terminates.
	It remains to show that the recursion depth of Algorithm~\ref{alg:reg} is bound.
	Note that for every recursive call of $\projReg$ the degree in $x_k$ for at least one polynomial in $\A = (P,Q)$ decreases.
	Once $ \hat{P} = \hat{Q} = \emptyset $ no further recursive call is performed.
	We distinct two cases:
	\begin{enumerate}[label={\it Case \arabic*:}, wide=0em,leftmargin=1em]
		\item\label{prf:reg:term:case1}
		Assume $|\hat{P}| > 0$. Then, the degree of $x_k$ in
		polynomials of $P$ is reduced in each recursive call of
		lines~\ref{alg:reg:a}-\ref{alg:reg:qe} of Algorithm~\ref{alg:reg}.
		In case one polynomial $p\in \hat{P}$ remains, we use $ \projCoeff $ to remove $ x_k $.
		
		\item\label{prf:reg:term:case2} 
		Assume $|\hat{P}| = 0$.
		Algorithm~\ref{alg:reg} proceeds by splitting polynomials in $\hat{Q}$ in line~\ref{alg:reg:qq}.
		For a given polynomial $q\in\hat{Q}$ the recursion depth of the call in line~\ref{alg:reg:qq} is bound by the number of coefficients of $q$ in $x_k$.
		For each call the leading coefficient is removed. Once $x_k$
		is removed, we have $\lc{q}{x_k} = q$.
		Then, we either return in line~\ref{alg:reg:ret2} or the guard of the call in line~\ref{alg:reg:qq} is \sfFalse.
	\end{enumerate}

	\noindent
	As all recursive calls eventually terminate,
	Algorithm~\ref{alg:reg} terminates.
	Note that from the zero decomposition argument below follows that the recursion always ends in line~\ref{alg:reg:ret1} or \ref{alg:reg:ret2}. This usually happens before all $ x_k $ are eliminated.
	
	\medskip 
	\noindent \textit{Zero Decomposition:}
	Results of~\cite{wang2001elimination,wang2000computing} imply
	that \textsf{RegSer} is a zero decomposition procedure that generates a sequence of regular systems.
	Besides other properties, for a regular system $(P,Q)$ holds that either $\hat{P} = \emptyset$ or $\hat{Q} = \emptyset$. Furthermore, $P$ is a triangular set, thus $|\hat{P}| \leq 1$.
	With a very similar argument can be proven that $\projReg$ performs a zero decomposition towards regular systems, although the systems are not fully computed.
	In $\projReg$ the decomposition ends after $x_k$ has been fully processed as for generating explanations further decomposition is not required.
	
	Let $\A' = (P,Q)$ be one decomposed system from $\A$.
	We first show that $ P,Q\in\FpXkm $ and $ \alphaB \notin\zeroq(\A')$.
	From the lemmas presented in~\cite{wang2001elimination} for
	\textsf{RegSer}, it follows that the each decomposition step in lines~\ref{alg:reg:a}-\ref{alg:reg:qe} of Algorithm~\ref{alg:reg} as well as line~\ref{alg:reg:qq} performs a zero decomposition according to
	equation~(\ref{eq:zd}); thus, $\alphaB$ cannot be extended to a zero of any such generated system.
	In case $ \A' $ is a systems that was not further expanded in a conditional recursive call, i.e.\ the negation of the guard is in $\A'$, then the desired property holds by construction.
	In case $ \A' $ contains a polynomial from $ \FpXkm $ which
	excludes $ \alphaB $ directly, Algorithm~\ref{alg:reg} stops
	in lines~\ref{alg:reg:ret1} or \ref{alg:reg:ret2}, returning
	only this one polynomial.
	In case the regular decomposition procedure has concluded for $ x_k $ and no such polynomials can be found, by definition of a regular system,
	we end up with either exactly one polynomial $p\in\hat{P}$ or $\hat{Q} \neq \emptyset$, but not both.
	We distinct two cases:
	\begin{enumerate}[label=(\alph*), wide=0em,leftmargin=1em]
		\item
		Assume $\hat{Q} = \emptyset$, then $\hat{P} = \{p\}$.
		Since $ \alphaB $ cannot be extended to a zero of $ \A' $ but is not excluded by any other polynomial in $ \A' $, we conclude that it cannot be extended to become a zero of $ p $.
		Therefore, we may call $\projCoeff$ in line~\ref{alg:reg:end1} to further decompose $ \A' $.
		The weak projecting zero decomposition property of $ \projCoeff $
		concludes the proof. 
		
		\item
		Assume $\hat{Q} \neq \emptyset$, then $\hat{P} = \emptyset$.
		Since the regular decomposition process has concluded, the recursive call in line~\ref{alg:reg:qq} is not executed for any $ q\in\hat{Q} $.
		We know that $ \alphaB $ cannot be extended such that all $ q\in\hat{Q} $ evaluate to a non-zero value, we have that the product of all $ q\in\hat{Q} $ when evaluated with $ (\alphaB,\beta) $ for all $ \beta\in\Fp $.
		We thus use $ \projCoeff $ to concludee the weak projecting
		zero decomposition property. 
	\end{enumerate}
	
	\smallskip
	
	We finally show that $ \projReg $ fulfills equation~\ref{eq:wpzd}.
	Let $ \xiB \in \zeroq(\A) $.
	As $\projReg$ performs a zero decomposition, there is a system $ \A' $ such that $ \xiB\in\zeroq(\A') $.
	If Algorithm~\ref{alg:reg} returns in
	lines~\ref{alg:reg:ret1} or \ref{alg:reg:ret2},  then $ \xiB $ is a zero of the returned single polynomial (sub-)system of $ \A' $.
	If $ \A' $ is not further expanded because of a guard $ c $ in an conditional recursive call, the polynomial of $ \lnot c $ is in the according set of $ A' $ and returned as a single polynomial sub-system of $ \A' $. As $ \xiB $ is a zero of $ \A' $, it is also a zero of the returned sub-system.
	Finally, in case $ \projReg $ utilizes $ \projCoeff $ to remove a polynomial in $ x_k $ from $ \A' $, we have by Theorem~\ref{thm:projCoeff} that $ \xiB $ is a zero of one of the returned (projected) systems.
	In any case, $ \xiB $ is a zero of the decomposition and thus equation~\ref{eq:wpzd} holds.
\end{proof}

\section{Implementation}\label{sec:impl}
We have implemented our \gls{mcsat} approach for SMT solving over finite fields in a new prototype\footnote{The source code of the prototype together with the generated test instances are available: \href{https://github.com/Ovascos/ffsat}{https://github.com/Ovascos/ffsat}},
written in Python and using the computer algebra system Sage~\cite{sage} for handling polynomials.
While our work is not limited to a specific field order, practical implementation constraints (from our implementation as well as Sage) are a limiting factor in the prototype's ability to handle large(r) field. Besides the general procedure of \gls{mcsat} and the theory specific details presented in Sections~\ref{sec:explain} and~\ref{sec:mcSat:exp}, the performance of our approach therefore depends on certain implementation details. In the sequel we discuss our design decisions.

\paragraph{Selecting literals for propagation.}
While the general \gls{mcsat} framework does not restrict the application of theory propagation beyond the conditions of the \TPropRule rule, it is up to the theory to determine whether a theory propagation is applicable and appropriate. 
For the theory of polynomials over finite fields, we utilize a similar propagation strategy as \cite{nlsat} uses for reals.
Let $ \langle M, \C\rangle_{k} $ be the current state with feasible trail $ M $ and $ f \in C $ a literal of a previously selected (yet unsatisfied) clause $ C \in \C $ such that $ \sfLevel{f} = k $ and $ \sfValue{f}{M} = \sfUndef $.
If this happens, we use \TPropRule to add $ f $ to $ M $ in order to satisfy $ C $.

Let $ \mathcal{X}_k \subseteq \Fp $ be the set of possible values for $ x_k $ that satisfy $ \nu[M](f) $.
We can distinguish four different scenarios of propagation by comparing feasible values for $ x_k $, namely:

\begin{enumerate}[label=(\roman*)]
	\item If $ \mathcal{X}_k = \Fp $, then $ f $ is propagated.
	\item If $ \mathcal{X}_k = \emptyset $, then $ \lnot f $ is propagated.
	\item If $ \mathcal{X}_k \supseteq \sfFeasible{M} $, i.e.\ $ f $ does not restrict the feasible values of $ M $, then  $ f $ is propagated. %
	\item If $ \mathcal{X}_k \cap \sfFeasible{M} = \emptyset $, then $ \lnot f $ is propagated. %
\end{enumerate}

Informally, a propagation of $f$ demonstrates the (theory) knowledge that $C$ is fulfilled by $M$. 
By propagating $\lnot f$ with explanation $E$, it is very likely that a conflict will arise right away (cf. Example~\ref{ex:explain}). The design of $E$ results in an immediate resolution of $E$ with $C$. Because generating explanation clauses is costly, they are not generated at the moment of propagation but only when they are needed for conflict analysis.

\paragraph{Storing feasible values.}
As we work with  finite set $\Fp$ of theory values, feasible values for a reasonable small $q$ can be enumerated. Even for larger field orders, the number of zeros of a polynomial given a partial assignment is still constrained by the length of the polynomial.

\paragraph{Variable Order.}
The order in which the theory variables are assigned in the trail can hugely influence the number of conflicts and thus generated explanations.
Finding a beneficial variable order is a general consideration for both \gls{mcsat} style approaches and computer algebra algorithms alike.
While it is highly important for practical performance, our procedure is correct for any ordering; optimizing the variable order is an interesting task for future work.

\newcommand{\gb}{Gr\"obner basis\xspace}
\newcommand{\tsR}{\textsc{Rand}\xspace}
\newcommand{\tsC}{\textsc{Craft}\xspace}
\newcommand{\pFF}{\textsc{FFSat}\xspace}
\newcommand{\pGB}{\textsc{GB}\xspace}

\section{Experiments and Discussion}\label{sec:experiments}
We compare the performance of our Python prototype in solving polynomial systems to state-of-the-art \gb techniques provided by Sage.
It is important to note that by design \gb techniques require
polynomial systems (cf. Section~\ref{sec:zeroFF}) as inputs and are
incapable of handling polynomial constraints (i.e.\ disjunctions and
conjunctions of polynomials). 
We therefore limit our experimental comparison to polynomial
benchmarks with a small subset of inputs, as 
\gb algorithms with general polynomial constraints  over finite fields would
involve exponential many calls (in the number of constraints) to the \gb algorithm. %

We further note that SMT-LIB standard and repository~\cite{SMTLIB}
does not support finite field arithmetic. For this reason, we
cannot yet directly compare our work to SMT-based solvers. To
circumvent such limitations, 
 we represent polynomial constraints directly in our Python framework
 and compare our work only  to \gb 
approaches supporting such input. 

\paragraph{Experimental setup.} For our experiments on SMT solving
over finite fields,
we created $250$ polynomial systems over a range of finite field orders ($3$, $13$, $211$) %
and different numbers of variables (up to $64$). 
To get better insights, we have utilized two different methods of polynomial system generation:
\begin{itemize}[wide=0em,leftmargin=1em]
	\item \tsR:
	All polynomials in this test set are fully random generated by Sage's \texttt{random\_element} function.
	The degree of the polynomials is at most $4$.
	These created systems are more frequently unsatisfiable and have fewer zeros on average. This category of tests has smaller systems and fewer variables since they are challenging to solve (for any strategy).
	It is ensured that at least one polynomial has a constant term to avoid trivial $0$-solutions, as this would give our approach an unfair advantage.
	\item \tsC:
	These polynomial systems are crafted to have multiple solutions by explicitly multiplying zeros.
	They tend to be easy to solve. %
	Thus, these systems are considerable larger with a huge amount of variables.
	Polynomial constraints are restricted to up to $5$ distinct variables with up to $3$ zeros each.
\end{itemize}

\noindent
Each test set consists of 25 polynomial systems with
fixed field order and a fixed number of variables and constraints; see
Table~\ref{tbl:experiments}. Our experiments were run on an AMD EPYC 7502 CPU 
with a timeout of $300$ seconds per benchmark instance.

We compare our procedure (\pFF) to a \gb approach (\pGB). The latter uses 
field polynomials to limit the solutions to those for the base field.
To get an elimination ideal and thus ensure to get a satisfiable
assignment from a calculated \gb, one typically relies on  lexicographic term-ordering, which is especially expensive.
However, to ``only'' check whether a polynomial system is satisfiable without returning an assignment, it suffices to calculate the \gb in any term ordering.
In our experiments we therefore calculate two different bases. \pGB uses the (efficient) default ordering provided by Sage, while $\pGB_\textsc{lex}$ uses a lexicographic term ordering.

\paragraph{Experimental results.}
For analyzing our experimental findings, let us note that the already
highly engineered \gb algorithms written in C/C++ utilized by Sage
have an inherent performance advantage compared to our Python
implementation. Yet, Table~\ref{tbl:experiments} demonstrates that our
approach works well for satisfiable cases.
 The number of instances that were resolved between \pFF and \pGB
 within the predetermined timeout of 300s for each instance is also  compared in Table~\ref{tbl:experiments}.
Each test is identified by its type, the finite field size $q$, the number of variables $n$, and the number of constraints per system $c$.

\begin{table}
	\begin{center}
		\begin{tabular}{c|ccc|ccc}
			Type & $q$ & $n$ & $c$ & \pFF & \pGB & $\pGB_\textsc{lex}$ \\
			\hline
			\tsR & $3$ & $8$ & $8$ & \textbf{25} & \textbf{25} & \textbf{25} \\
			\tsR & $3$ & $16$ & $16$ & \textbf{12} & 11 & 0 \\
			\tsC & $3$ & $32$ & $32$ & \textbf{25} & \textbf{25} & 0 \\
			\tsC & $3$ & $64$ & $64$ & \textbf{25} & 24 & 0 \\
			\tsR & $13$ & $8$ & $4$ & \textbf{25} & 0 & 0 \\
			\tsR & $13$ & $8$ & $8$ & \textbf{1} & 0 & 0 \\
			\tsC & $13$ & $32$ & $16$ & \textbf{19} & 18 & 1 \\
			\tsR & $211$ & $8$ & $4$ & \textbf{17} & 0 & 0 \\
			\tsR & $211$ & $8$ & $16$ & 0 & 0 & 0 \\
			\tsC & $211$ & $16$ & $8$ & 24 & \textbf{25} & \textbf{25} \\
		\end{tabular}
	\end{center}
	\caption{Instances solved by \pFF, \pGB, and $\pGB_\textsc{lex}$, out of
25 polynomial systems per test set.}
	\label{tbl:experiments}
\end{table}

\paragraph{Experimental analysis and discussions.}
We note the following key insights of our approach in comparison to \gb approaches:
\begin{itemize}[wide=0em,leftmargin=1em]
	\item Our Python prototype can already keep up with highly engineered \gb approaches on some classes of instances
	but further engineering work is required to match existing \gb implementations consistently.
	\item The strength of our work comes with solving satisfiable
          instances. This is because we can often find a satisfying
          assignment without fully decomposing the polynomial system.
	\item While the \gls{mcsat} approach is capable of detecting conflicts by deriving empty clauses, the point in time when an empty clause can be derived is highly dependent on the variable order.
	\item The lack of an order on finite fields leads to the
          generation of many inequality constraints when a partial
          assignment cannot be extended. Developing further
          optimizations to detect such cases, especially for unsat
          instance with large field orders, is a task for future
          work. 
	\item \gb approaches are saturation based, thus they have a
          conceptually advantage on unsatisfiable instances.
					This is due to the fact that \gb methods terminate once a non-zero constant is determined, which is why we feel inventing and employing extremely efficient monomial orderings would aid our work. 
	\item It is notable that our approach seems to show complementary performance characteristics to existing \gb techniques, indicating that a portfolio approach could be valuable.
\end{itemize}
In summary, our current experiments show the general effectiveness of
our approach, indicating also how present  weaknesses can be mitigated with existing techniques from \gls{mcsat} and \gls{cdcl} solving (e.g.\ heuristics on the variable order, restarts, pre- and improcessing techniques to reduce clause complexity, clause deletion, etc.).

\section{Related Work}

Gröbner bases~\cite{Buchberger} and triangular sets~\cite{aubry1999theories, 
triagSetOverview} have been introduced to compute  the solution space
of polynomial equations,  by reducing the degree of polynomials through
variable elimination. %
Solving  polynomial equations in general entails finding all of its
solutions in the algebraic closure of the underlining coefficient field. 
Yet, for the purpose of satifiability, solutions in the base field
are usually of the most interest.
Obviously, there are only a finite number of solutions if the base
field is finite; yet, enumerating all of the finitely numerous
possibilities is not practically viable. 

To limit the solutions of Gröbner bases and triangular sets to finite
fields, a common technique is to introduce
and add the set of field polynomials to the set of polynomial
equations~\cite{ffgb,huang2012elim}
Using field polynomials though greatly impacts practical performance,
as showcased in~\cite{hader-smt22}. 
Specialized ways for computing 
Gröbner bases and triangular sets over finite fields have therefore been created, 
such as the XL algorithm~\cite{XLAlgo}, F4~\cite{f4}, and 
F5~\cite{f5} for Gröbner bases.
Although all of these strategies are aimed at solving polynomial
systems over finite fields, none of them explicitly address
inequalities even though inequalities may be converted into equalities
using the Rabinowitsch trick~\cite[4.2 Prop.~8]{CoxLittleOShea-Book07}. %

Optimization concepts for triangular sets have been introduced in~\cite{wang2001elimination}, including efficient characteristic set 
algorithms~\cite{gao2012characteristic,huang2012elim} and polynomial decomposition 
into simple sets~\cite{li2010decomposing}.
Although these approaches integrate reasoning over inequalities, none
of them considers systems of clauses with polynomial constraints as
needed for our SMT solving problem. 
Furthermore, they all require the generation of exponentially many sets to fully describe the systems. Our approach only explores a linear sized decomposition on demand.

A related approach to our search procedure is given in the hybrid
framework of~\cite{bettale2012solving,bettale2009hybrid}.
Here, a partial evaluation of the system is performed by fixing some
variables before starting multiple Gröbner bases
computations. Instead, in our work we show that subresultant regular
subchain computations allows us to avoid working with  Gröbner bases
(and hence their double-exponential computational complexities).

Substantial progress has also been devoted to the problem 
of dealing with specific boolean polynomials, i.e.\ finite fields with only two 
elements.
PolyBoRi~\cite{brickenstein2009polybori,bettale2012solving} is fairly 
effective in this domain, but it does not generalize towards arbitrary
finite fields, which is the focus of our work. 

Recently, an algebraic SMT decision technique for computing
satisfiability of polynomial equalities/inequalities over large
prime fields has been introduced in~\cite
{vellaalt-smt22}. As polynomial systems are a subset of our polynomial
constraint clauses, our work complements this effort, by also
establishing a computational approach for deriving explanation clauses
within \gls{mcsat} reasoning.

\section{Conclusion}

We introduce a novel  reasoning approach for determining the satisfiability of a given system of non-linear polynomial constraints over finite fields. 
As a framework, we adopt an \gls{mcsat} decision procedure and expand it with a specific theory propagation rule that allows variable propagation over finite fields by adding so-called explanation clauses.
To show the existence of these explanation clauses over finite fields, we apply zero decomposition procedures over polynomial constraints.
Based on the structure of the polynomial system, we construct explanation clauses to resolve conflicting variable assignments.
We distinguish between 
single polynomial projections and projections of multiple polynomials using subresultant regular subchains. 
Our work avoids using field polynomials while reducing the size of the projected polynomials.

We aim to further optimize our prototype through specific design decisions. For example, we will investigate the effect the variable order has on SMT solving over finite fields. 
Furthermore, we wish to improve performance if the given polynomial system is unsatisfiable; in this case, we are also interested in generating proof certificates. Finally, integrating our prototype within a high-performance \gls{smt} solver is another line for future work.

\paragraph{Acknowledgements.} 
We thank Nikolaj Bj\o rner for the fruitful discussion on this work.
We acknowledge partial support from the ERC Consolidator Grant
ARTIST 101002685, the TU Wien SecInt Doctoral College, and the FWF
SFB project SpyCoDe F8504.

\bibliographystyle{splncs04}
\bibliography{ffsat}

\end{document}